\newtheorem{thm}{Theorem}[section]
\newtheorem{cor}[thm]{Corollary}
\newtheorem{lem}[thm]{Lemma}
\newtheorem{prop}[thm]{Proposition}
\newtheorem{rem}[thm]{Remark}
\newtheorem{defn}[thm]{Definition}
\numberwithin{equation}{section}
\newcommand{\be}{\begin{equation}}
\newcommand{\ee}{\end{equation}}
\newcommand{\bea}{\begin{eqnarray}}
\newcommand{\eea}{\end{eqnarray}}
\newcommand{\ba}{\begin{aligned}}
\newcommand{\ea}{\end{aligned}}
\newcommand\br[1]{\{ #1 \}}                %
\begin{document}

{\title{Notes on the degenerate integrability of  reduced systems obtained from the master systems of free motion on
cotangent bundles of compact Lie groups}}\thanks{To appear in the proceedings of the XL Workshop on Geometric Methods in Physics, Bialowieza, 2-8 July, 2023.}

\maketitle

\begin{center}

L. Feh\'er${}^{a,b}$

\medskip
${}^a$Department of Theoretical Physics, University of Szeged\\
Tisza Lajos krt 84-86, H-6720 Szeged, Hungary\\
e-mail: lfeher@physx.u-szeged.hu

\medskip
${}^b$Institute for Particle and Nuclear Physics\\
Wigner Research Centre for Physics\\
 H-1525 Budapest, P.O.B.~49, Hungary

\end{center}

\begin{abstract}
The  reduction of the `master system' of free motion on the cotangent bundle $T^*G$  of a compact, connected and simply connected,
semisimple Lie group  is considered using the conjugation action of $G$.  It is  proved that the restriction of the reduced system
to the smooth component of the quotient space $T^*G/G$, given by the principal orbit type,
inherits the degenerate integrability of the master system.  The proof can be generalized easily to other interesting examples of Hamiltonian reduction.

\end{abstract}

 \setcounter{tocdepth}{2}

 \tableofcontents

 \vspace{1cm}

\def\1{{\boldsymbol 1}}                     %
\def\cH{{\mathcal H}}                       %
\def\tr{\mathrm{tr}}                        %
\def\ri{{\rm i}}                            %
\def\bC{\mathbb{C}}                         %
\def\bR{\mathbb{R}}                         %
\def\cF{{\mathcal F}}                       %
\def\reg{\mathrm{reg}}                      %
\def\dt {\left.\frac{d}{dt}\right|_{t=0}}   %
\def\cG{{\mathcal G}}                       %
\def\red{{\mathrm{red}}}                    %
\def\rank{{\mathrm{rank}}}                  %
\def\cM{\mathcal{M}}                        %
\def\fR{\mathfrak{R}}                       %
\def\cC{\mathcal{C}}                        %
\def\cO{\mathcal{O}}                        %
\def\fF{\mathfrak{F}}                       %
\def\fH{\mathfrak{H}}                       %
\def\fC{\mathfrak{C}}                       %
\def\pol{\mathrm{pol}}                      %
\def\fP{\mathfrak{P}}                       %
\def\pol{\mathrm{pol}}                      %
\def\princ{\mathrm{princ}}                  %
\def\cT{{\mathcal T}}                       %

\newpage

\section{Introduction}
\label{S:Intr}

In our lecture at the workshop we reported the results of   \cite{F1} on Poisson reductions of integrable `master systems' that
live on the classical doubles of a compact Lie group, $G$. The doubles in question are the cotangent bundle, and its
generalizations provided by the Heisenberg double \cite{STS}  and the internally fused quasi-Poisson/quasi-Hamiltonian double \cite{AKSM,AMM}.
The essence of the reduction consists in taking the quotient of the double by the pertinent `conjugation action' of $G$.
The simplest example of the master systems is the Hamiltonian system on $T^*G$
associated with free motion on $G$ along the geodesics of a
bi-invariant Riemannian metric. This Hamiltonian system, as well as its counterparts on the other doubles, enjoy
the property of degenerate integrability \cite{Nekh},  alternatively
 called non-commutative integrability \cite{MF} or superintegrability \cite{Fas,MPV}.

There exists a general principle \cite{J,Zu} stating that (degenerate) integrability is generically inherited under Hamiltonian reduction.
More concretely, this has been shown  by the studies \cite{Re1,Re2,Re2+,Re3,Re4} in which superintegrable spin Calogero--Moser--Sutherland type models
and their relativistic deformations  were derived by Hamiltonian reduction.
In the papers just cited,   integrability is stated to hold on some open  subset of the reduced phase space or on generic symplectic leaves.
Ideally, we would like to replace such `generic statements' by more specific statements about the subsets of the quotient space  where the integrability of the reduced system is valid.
We also would like to develop  mathematically rigorous proofs of the reduced integrability regarding
those examples for which we previously treated this issue by convincing, but admittedly not fully rigorous arguments \cite{F1}.

In this work, we make a modest step towards realizing the abovementioned goals by presenting  an exact proof
of the degenerate integrability of the reduction of the system of free motion on $T^*G$, focusing on the smooth part of the Poisson quotient
of the cotangent bundle that descends from the dense  open subset of principal orbit type for the conjugation action.
For readers who wish to go to the main results directly, we note that the full quotient space will be denoted
$\cM^\red =\cM/G$, where $\cM= T^*G$, and
$\cM_*^\red \subset \cM^\red$ denotes the dense open subset corresponding to the principal orbits.
Another interesting subset that appears in our investigation is the dense open subset $\cM_{**}^\red \subset \cM_{*}^\red$,
which corresponds to the principal orbits of $G$ in the `space of constants of motion'.
Our main results about degenerate integrability on these smooth Poisson manifolds are given by Theorems \ref{thm:45} and \ref{thm:411}.

Our claims will not surprise the experts, although we have not found these results
in the literature.  Our principal achievement is that  we managed to develop a method for analyzing
the specific example of the cotangent bundle that should be applicable to
many other examples. This is certainly the case regarding the master systems
studied in \cite{F1}, as  will be briefly indicated at the end of the paper.

The content of the text can be sketched as follows.
The next section is devoted to preliminaries, including the definitions of degenerate integrability on symplectic
and Poisson manifolds that we use. In Section 3, we  describe the unreduced master system of free motion on the cotangent bundle,
and explain its well-known integrability property.
After setting up the reduction, Section 4 consists of four subsections.
First, we prove the degenerate integrability on $\cM_{**}^\red$ in the sense Definition \ref{defn:23}
and establish very strong regularity properties of the
restricted reduced system.  Incidentally, these regularity properties (see e.g.~Lemma \ref{lem:43})
are precisely those that are assumed in \cite{LMV} for the construction of generalized action-angle coordinates.
Second, we prove that degenerate integrability in the sense of Definition \ref{defn:22} holds on every symplectic leaf of $\cM_{**}^\red$ of maximal
dimension.  Third, we elaborate the example of $\mathrm{SU}(2)$ to illustrate that $\cM_{**}^\red$ is a proper subset of
$\cM_{*}^\red$.  Fourth, we prove degenerate integrability on the full subset $\cM_*^\red \subset \cM$ of principal orbit type.
We conclude in Section 5 by giving an outlook on other applications of our method.
Finally, the  appendix contains a technical result which
plays a crucial role in Subsection 4.4.

\section{Preliminaries}
\label{S:P}

We first recall the notion of degenerate integrability, and then present some Lie theoretic
background material.
In the literature \cite{Fas,J,LMV,Nekh,Re2,Rud,Zu} , one can find slightly different, but essentially equivalent definitions of
degenerate integrability on symplectic  as well as on Poisson manifolds.
We shall use the following definitions.

\begin{defn}\label{defn:22}
Suppose that $\cM$ is a symplectic manifold of dimension $2m$ with associated Poisson bracket
$\br{-,-}$ and two distinguished subrings $\fH$ and $\fF$ of $C^\infty(\cM)$
satisfying the  following conditions:
\begin{enumerate}[itemsep=0pt]
\item{
The ring $\fH$ has functional dimension $r$ and $\fF$ has functional dimension $s$ such that
$r + s = \dim(\cM)$ and  $r<m$.}
\item{Both $\fH$ and $\fF$ form Poisson subalgebras of $C^\infty(\cM)$,  satisfying
$\fH\subset \fF$  and $ \{\cF, \cH\}=0$ for all $\cF\in \fF$, $\cH\in \fH$.}
\item{The Hamiltonian vector fields of the elements of $\fH$ are complete.}
\end{enumerate}
Then, $(\cM, \br{-,-}, \fH, \fF)$ is called a  degenerate integrable system of rank $r$.
The rings $\fH$ and $\fF$ are referred to as the ring of Hamiltonians and constants of motion, respectively.
\end{defn}

Recall that the functional dimension of a ring $\fR$ of functions on a manifold $\cM$ is $d$
if the exterior derivatives of the elements of $\fR$  \emph{generically} (that is on a dense open submanifold)
span a $d$-dimensional subspace of the cotangent space.
Condition $(3)$ above on the completeness of the flows is superfluous if the joint level surfaces
of the elements of $\fF$  are compact.
Degenerate integrability of a single Hamiltonian $\cH$ is  understood to mean
that there exist rings $\fH$ and $\fF$ with the above properties such that $\cH\in \fH$.
Observe that $\fF$ is either equal to or can
be enlarged to the centralizer of $\fH$ in the Poisson algebra $(C^\infty(\cM), \br{-,-})$.
In the literature the definition is often formulated in terms of functions
$f_1,\dots, f_r, f_{r+1},\dots f_s$ so that they generate $\fF$ and the first $r$ of them generate $\fH$.
If the definition is modified by setting $r=s=m$ and $\fH = \fF$, then one obtains
the notion of Liouville integrability.

\begin{defn}\label{defn:23}
Consider
 a Poisson manifold $(\cM,\br{-, -})$  whose Poisson tensor has maximal rank $2m\leq \dim(\cM)$
on a dense open subset.  Then,  $(\cM,\br{-,-},\fH,\fF)$ is called a degenerate integrable
system of rank $r$ if conditions (1), (2), (3) of Definition \ref{defn:22} hold, and
the Hamiltonian vector fields of the elements of $\fH$ span an $r$-dimensional subspace of
the tangent space over a dense open subset of $\cM$.
\end{defn}

The integrable systems of Definition \ref{defn:22} are integrable in the sense
of Definition \ref{defn:23}, too, since in the symplectic case the condition on the span of
the Hamiltonian vector fields of $\fH$ holds automatically.
Liouville integrability in the Poisson case results by imposing $r=m$
instead of $r<m$ in the definition.
In that case, our definition implies that $\fF$ is an abelian Poisson algebra
(in some papers this condition appears in the definition).

\medskip
Now turning to Lie theoretic background material,
let $G$ be a \emph{connected and simply connected} compact Lie group associated with a simple Lie algebra $\cG$.
Denote $\langle - , - \rangle_\cG$ the Killing form of $\cG$, possibly multiplied by a fixed real constant for convenience.
By definition, we call an element of $G$ \emph{regular} if its isotropy group  with respect to the conjugation action of $G$ on itself
is a maximal torus. Similarly, an element of $\cG$ is regular if its isotropy subgroup with respect
the adjoint action of $G$ on $\cG$ is a maximal torus of $G$.
The regular elements form the connected, dense open subsets $G_\reg \subset G$ and $\cG_\reg \subset \cG$.
For any maximal torus $T<G$ and its Lie algebra $\cT <\cG$, we put
$T_\reg:= T \cap G_\reg$ and $\cT_\reg:= \cT \cap \cG_\reg$.
 We often call the adjoint action of $G$ on $\cG$ also `conjugation action'
and denote it like that, as if $\cG$ was necessarily a matrix Lie algebra.

We recall \cite{Mich} that the functional dimensions of the
rings of conjugation invariant functions $C^\infty(\cG)^G$ and $C^\infty(G)^G$ are both
equal to the rank of the Lie algebra.
In fact, the differentials of the invariant functions span a $\rank(\cG)$-dimensional linear
space at every regular element (otherwise they span a lower dimensional space).

Let $T< G$ be a maximal torus, with Lie algebra $\cT < \cG$.
We shall utilize the fact that there always exists another maximal torus $T'$, with Lie algebra $\cT'$, satisfying
the relations
\be
T\cap T' = Z(G) \quad \hbox{and}\quad \cT \perp \cT',
\label{TT}\ee
where $Z(G)$ is the center of $G$ and the orthogonality  between $\cT$ and $\cT'$ refers to the Killing form.
In fact,  Kostant's  \cite{Ko}  maximal torus `in
apposition' to $T$ has this property.
An overview of this concept can be found  in \cite{Mein}.
For $G= \mathrm{SU(n)}$ and $T$ the standard maximal torus,
$T'$ can be presented as  the isotropy subgroup of the regular element $\Lambda_n\in \mathrm{SU(n)}$  defined by
\be
\Lambda_n = C \bigl(E_{n,1} + \sum_{k=1}^{n-1} E_{k,k+1} \bigr),
\label{SULa}\ee
where the $E_{i,j}$ denote the usual elementary matrices, and $C$ is a constant for which $(C)^n = (-1)^{n-1}$.
The matrix $C^{-1}\Lambda_n$  represents a cyclic permutation of the standard basis vector of $\bC^n$,
and the constant $C$ ensures $\det (\Lambda_n)=1$. In this case, it is elementary to check the properties
\eqref{TT}.

\begin{rem}
The simply connectedness of $G$ is not important, and $G$ can also be reductive.
However,
for a connected but not simply connected group, like $\mathrm{SO}(3)$, there could exist
exceptional regular orbits in $G$ for which the isotropy group is not connected and not abelian, but its identity component is abelian \cite{DK,GOV}.
For compact, connected Lie groups in general, the elements of $G$ whose isotropy subgroups with respect to the conjugation action are maximal tori
 fill the open, dense subset $G_\princ \subset G$.
In the simply connected case $G_\princ = G_{\reg}$ \cite{DK,GOV}, and we took this into account in  our definition of regular elements.
Our strong assumptions on the nature of the compact Lie group $G$ serve to `make life as simple as possible'.
\end{rem}

\section{The system of free motion on the cotangent bundle $T^*G$}
\label{S:Sec2}

 We identify $\cM:=T^*G$ with $G\times \cG^*$ by means of right-translations using the convention that
$\alpha_g \in T^*_g G$  is mapped to $(g, - R_g^*\alpha_g)\in G\times \cG^*$, where $R_g$ is right-multiplication by $g$.
We also identify the dual space $\cG^*$ with $\cG$  relying on the Killing form.
Then, the canonical Poisson bracket on the phase space
\be
\cM \equiv G \times \cG = \{(g,J)\mid g\in G,\, J\in \cG\}
\label{cM}\ee
can be written as
\be
\{ \cF, \cH\}(g,J) =
\langle \nabla_1 \cF, d_2 \cH\rangle_\cG - \langle \nabla_1 \cH, d_2 \cF \rangle_\cG + \langle J, [d_2 \cF, d_2 \cH]\rangle_\cG,
\label{PBcot}\ee
where the derivatives are taken at $(g,J)$.
Here, we use the $\cG$-valued derivatives of any $\cF \in C^\infty(\cM)$,  defined by
\be
 \langle X, \nabla_1 \cF(g,J) \rangle_\cG + \langle X', \nabla_1' \cF(g,J) \rangle_\cG := \dt \cF(e^{tX} g e^{tX'},J), \quad \forall
  (g,J)\in\cM, \, X,X' \in \cG,
 \label{nabG1}\ee
 and
 \be
 \langle X, d_2 \cF(g,J) \rangle_\cG  := \dt \cF(g, J + tX), \quad \forall
  (g,J)\in\cM, \, X \in \cG.
 \label{d2}\ee
The group $G$ acts by diagonal (simultaneous) conjugations of $g$ and $J$, i.e., the action of $\eta \in G$ on $\cM$
is furnished by the map
\be
A_\eta: (g,J) \mapsto (\eta g \eta^{-1}, \eta J \eta^{-1}).
\label{Aeta}\ee
This Hamiltonian action is generated by the moment map $\Phi: \cM \to \cG$,
\be
\Phi(g,J) = J - \tilde J
\quad \hbox{where}\quad \tilde J := g^{-1} J g.
\label{PhimomT}\ee
The space of $G$-invariant real functions, $C^\infty(\cM)^G$, forms a Poisson subalgebra of $C^\infty(M)$.
By definition, this is identified  as the Poisson algebra of smooth functions carried by the quotient space $\cM/G$.
Note in passing that the center $Z(G)<G$ acts trivially.

Below, and in the rest of the text,  we use the notation
\be
r:= \rank(\cG).
\label{rdef}\ee
Let us  consider the invariant Hamiltonians $\cH \in C^\infty(\cM)^G$ of the form
\be
\cH(g,J) = \varphi(J)
\quad\hbox{with}\quad \varphi\in C^\infty(\cG)^G.
\label{Hams1}\ee
That is,  $\cH = \pi_2^*(\varphi)$ using the natural projection $\pi_2: \cM \to \cG$,
\be
\pi_2(g,J):= J.
\ee
There are $r$ functionally independent Hamiltonians in this set,
since the ring of invariants for the adjoint action of $G$ on $\cG$,
$C^\infty(\cG)^G$, is freely generated by $r$ basic polynomial invariants.
The Hamiltonian vector field engendered by $\cH$ \eqref{Hams1} can be written as
\be
\dot{g} = (d \varphi(J)) g, \qquad \dot{J} =0,
\label{unredeq1}\ee
and its integral curve  through the initial value $(g(0), J(0))$ reads
\be
(g(t), J(t))= ( \exp(t d \varphi(J(0))) g(0), J(0)).
\label{unredsol1}\ee
The corresponding constants of motion are given by arbitrary functions of $J$ and $\tilde J$ \eqref{PhimomT}.

Now we introduce the map
\be
\Psi: G \times \cG \to \cG \times \cG,
\quad
\Psi: (g,J) \mapsto (g^{-1} J g, J).
\label{Psi}\ee
We shall need the  subset of the image
\be
\fC:= \Psi(\cM) \subset \cG \times \cG
\label{fC}\ee
defined by
\be
\fC_\reg:= \Psi(\pi_2^{-1}(\cG_\reg)) \subset \cG_\reg \times \cG_\reg.
\label{fCreg}\ee
Next, we record important properties of the map $\Psi$ and the set\footnote{The reason for working with  $\fC_\reg$ instead of $\fC$ below is that the latter
is not a smooth manifold in any natural way.}
 $\fC_\reg$.

\begin{prop}
The smooth map  $\Psi$ has the following properties:
\begin{enumerate}
\item{ $\Psi$ is constant along the flows \eqref{unredsol1} }.
\item{It is equivariant if $G$ acts on $\cG \times \cG$ by diagonal conjugations.}
\item{It is a Poisson map with respect to the product Poisson bracket
$-\br{-,-}_\cG \times \br{-,-}_\cG$ on $\cG \times \cG$, where
$\br{-,-}_\cG$ is the Lie--Poisson bracket on $\cG \simeq \cG^*$. }
\item{Its restriction on the dense open submanifold $\pi_2^{-1}(\cG_\reg)$ has constant rank,
equal to $\dim(\cM)-r$.}
\item{The image  $\fC_\reg$ \eqref{fCreg}
 is the joint zero set in $\cG_\reg \times \cG_\reg$ of the Casimir functions
$\chi_2(X,Y):= \chi(X) - \chi(Y)$  associated with the elements $\chi\in C^\infty(\cG)^G$.
Thus, it is an embedded submanifold of $\cG_\reg \times \cG_\reg$
of codimension $r$, and is also a Poisson submanifold of $\cG_\reg \times \cG_\reg$.}
\item{$\fC_\reg$ is connected.}
\end{enumerate}
\end{prop}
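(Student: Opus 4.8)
The plan is to treat the six items in roughly increasing order of difficulty: $(1),(2),(6)$ first, then $(3)$, then $(4)$ and $(5)$. Item $(1)$ is read off from the explicit integral curve \eqref{unredsol1}: there $J(t)\equiv J(0)$, and $\tilde J(t)=g(0)^{-1}\exp\!\bigl(-td\varphi(J(0))\bigr)J(0)\exp\!\bigl(td\varphi(J(0))\bigr)g(0)$; since $\varphi$ is conjugation invariant one has $[J(0),d\varphi(J(0))]=0$, so $\exp\!\bigl(td\varphi(J(0))\bigr)$ commutes with $J(0)$ and $\tilde J(t)\equiv\tilde J(0)$, whence $\Psi$ is constant along the flow. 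Item $(2)$ is immediate: substituting $A_\eta(g,J)=(\eta g\eta^{-1},\eta J\eta^{-1})$ into \eqref{Psi} gives $\Psi(A_\eta(g,J))=(\eta\tilde J\eta^{-1},\eta J\eta^{-1})$, which is the diagonal conjugate of $\Psi(g,J)$. Item $(6)$ holds because $\pi_2^{-1}(\cG_\reg)=G\times\cG_\reg$ is connected (both $G$ and $\cG_\reg$ are connected) and $\fC_\reg$ is its continuous image under $\Psi$.

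For item $(3)$ it suffices to verify that $\Psi^*$ intertwines the brackets on pullbacks of functions living on each of the two factors of $\cG\times\cG$, for which I would compute the relevant $\cG$-valued derivatives from \eqref{nabG1}--\eqref{d2}. If $\cF(g,J)=f(\tilde J)$ with $f\in C^\infty(\cG)$, then $d_2\cF(g,J)=g\,df(\tilde J)\,g^{-1}$ and $\nabla_1\cF(g,J)=g\,[df(\tilde J),\tilde J]\,g^{-1}$; if $\cH(g,J)=h(J)$, then $d_2\cH(g,J)=dh(J)$ and $\nabla_1\cH(g,J)=0$. Substituting these into \eqref{PBcot} and repeatedly using the invariance of the Killing form, the bracket of two functions of $\tilde J$ collapses to $-\{f_1,f_2\}_\cG(\tilde J)$, the bracket of two functions of $J$ equals $\{h_1,h_2\}_\cG(J)$, and the mixed bracket vanishes identically; this is exactly the asserted intertwining with the product bracket $-\{-,-\}_\cG\times\{-,-\}_\cG$. (Conceptually, $J$ and $\tilde J$ are, up to a sign, the moment maps of the commuting cotangent-lifted left- and right-translation actions on $T^*G$, which is why the Poisson property is forced; but the direct check is short.)

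For item $(4)$, trivialize $T(G\times\cG)$ using right translation on the $G$-factor, so that a tangent vector at $(g,J)$ is a pair $(\xi,X)\in\cG\times\cG$ with $\xi$ generating $t\mapsto g e^{t\xi}$. A short computation gives $d\Psi_{(g,J)}(\xi,X)=\bigl([\tilde J,\xi]+g^{-1}Xg,\,X\bigr)$, whose image is the linear subspace $\{(v,X):X\in\cG,\ v-g^{-1}Xg\in[\tilde J,\cG]\}$, of dimension $2\dim(\cG)-\dim(\ker\mathrm{ad}_{\tilde J})$. When $J\in\cG_\reg$ the conjugate $\tilde J=g^{-1}Jg$ is regular as well, so $\ker\mathrm{ad}_{\tilde J}$ is a Cartan subalgebra of dimension $r$; hence the rank of $d\Psi$ at such a point equals $2\dim(\cG)-r=\dim(\cM)-r$, and it is constant over $\pi_2^{-1}(\cG_\reg)$.

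For item $(5)$, observe that a pair $(Y,J)\in\cG_\reg\times\cG_\reg$ lies in $\fC_\reg$ exactly when $Y$ and $J$ belong to one and the same adjoint orbit; since the $r$ basic polynomial invariants --- equivalently, all of $C^\infty(\cG)^G$ --- separate regular adjoint orbits, this is equivalent to $\chi(Y)=\chi(J)$ for every $\chi\in C^\infty(\cG)^G$, that is, to $(Y,J)$ being a common zero of the functions $\chi_2$. Taking the $r$ basic invariants $\chi_1,\dots,\chi_r$, whose differentials are linearly independent at every regular element, the differentials of the corresponding $r$ functions $\chi_2$ stay linearly independent throughout $\cG_\reg\times\cG_\reg$, so $0$ is a regular value and $\fC_\reg$ is an embedded submanifold of codimension $r$. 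Finally, $\chi(X)$ and $\chi(Y)$ are Casimirs of the product Lie--Poisson bracket, so $\fC_\reg$, being a joint level set of Casimirs inside the Poisson manifold $\cG_\reg\times\cG_\reg$, is a union of symplectic leaves, hence a Poisson submanifold. The step I expect to be the main obstacle is really only bookkeeping: item $(3)$ requires carefully pushing the $\cG$-valued derivatives through \eqref{PBcot} and organizing the cancellations, and item $(5)$ rests on the standard but nontrivial fact that invariant polynomials separate regular adjoint orbits; the remaining items are short.
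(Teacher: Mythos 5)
Your proof is correct and follows the same route as the paper, which merely asserts that all six items follow by ``straightforward inspection'' and only sketches the rank computation for (4) (via the image of $D\Psi$) and the level-set/independence-of-invariants argument for (5) and (6); you supply exactly those computations in full detail. The only cosmetic slip is in item (4), where the trivialization $t\mapsto g e^{t\xi}$ is by left rather than right translation, but your explicit formula for $d\Psi_{(g,J)}(\xi,X)$ and the resulting rank count $\dim(\cM)-r$ are correct.
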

\begin{proof}
The claims follow from the formula \eqref{Psi} of the map $\Psi$ by straightforward inspection.
For example, one sees that the image of the derivative $D\Psi(g,J)$ is equal to
$T_{g^{-1}Jg} \cO_G(J) \oplus T_J \cG$,
where $\cO_G(J)$ is the $G$-orbit through $J$, and this implies property (4).
The subset $\fC_\reg \subset \cG_\reg \times \cG_\reg$ is an embedded submanifold since
 the differentials of the elements of $C^\infty(\cG)^G$ span an $r$-dimensional
 space at every $J\in \cG_\reg$, and its connectedness follows from
  the connectedness of $\pi_2^{-1}(\cG_\reg)= G \times \cG_\reg$.
\end{proof}

\begin{cor}
Consider the rings of functions
\be
\fH:= \pi_2^* (C^\infty(\cG)^G) \quad\hbox{and}\quad \fF:= \Psi^*(C^\infty(\cG \times \cG)).
\label{fHfF}\ee
The exterior derivatives of the
elements of $\fH$ and those of $\fF$ span, respectively, an $r$-dimensional and a $(\dim(\cM)-r)$-dimensional subspace
of the cotangent space at every point of $\pi_2^{-1}(\cG_\reg)$.
Consequently, these rings represent a degenerate integrable system on $\cM$ in the sense of Definition \ref{defn:22}.
\end{cor}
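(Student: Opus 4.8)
The plan is to check the three conditions of Definition~\ref{defn:22} for the pair $(\fH,\fF)$, all of which follow by assembling the six properties of $\Psi$ and $\fC_\reg$ from the Proposition with the recalled facts \cite{Mich} about the invariant rings $C^\infty(\cG)^G$.

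I would begin with the functional dimensions and the inequality $r<m$. Since $d(\pi_2^*\varphi)=\pi_2^*(d\varphi)$ and $\pi_2$ is a submersion, at each $(g,J)$ the span of $\{d\cH:\cH\in\fH\}$ is the isomorphic image under $(D\pi_2)^{*}$ of the span of $\{d\varphi:\varphi\in C^\infty(\cG)^G\}$ at $J$; the latter is $r$-dimensional precisely when $J\in\cG_\reg$ by \cite{Mich}, so $\fH$ has functional dimension $r$. For $\fF$, the span of $\{d\cF:\cF\in\fF\}$ at a point $p$ is exactly the image of $(D\Psi(p))^{*}$, of dimension $\mathrm{rank}\,D\Psi(p)$, which by property~(4) of the Proposition equals $\dim(\cM)-r$ at every point of the dense open set $\pi_2^{-1}(\cG_\reg)$; hence $\fF$ has functional dimension $s:=\dim(\cM)-r$. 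This also yields the displayed span statement of the Corollary. Finally $r+s=\dim(\cM)$, while $m=\tfrac12\dim(\cM)=\dim(G)$ and $r=\rank(\cG)<\dim(\cG)=\dim(G)=m$ since $\cG$ is simple.

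Next I would treat condition~(2). Because $\Psi$ is a Poisson map (property~(3)), $\Psi^{*}$ is a homomorphism of Poisson algebras, so $\fF$ is a Poisson subalgebra of $C^\infty(\cM)$. For $\fH$, formula~\eqref{PBcot} with $\nabla_1(\pi_2^*\varphi)=0$ gives $\{\pi_2^*\varphi_1,\pi_2^*\varphi_2\}(g,J)=\langle J,[d\varphi_1(J),d\varphi_2(J)]\rangle_\cG$, the Lie--Poisson bracket of $\varphi_1,\varphi_2$, which vanishes because $G$-invariant functions are Casimirs of $\br{-,-}_\cG$; thus $\fH$ is an abelian Poisson subalgebra. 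The inclusion $\fH\subset\fF$ follows from $\pi_2=\mathrm{pr}_2\circ\Psi$, where $\mathrm{pr}_2\colon\cG\times\cG\to\cG$ is the second projection, so that $\pi_2^*\varphi=\Psi^*(\mathrm{pr}_2^*\varphi)$. For the Poisson commutativity $\{\fF,\fH\}=0$, I note that $\mathrm{pr}_2^*\varphi$ depends only on the second factor and $\varphi$ is a Casimir of $\br{-,-}_\cG$, hence $\mathrm{pr}_2^*\varphi$ is a Casimir of the product bracket $-\br{-,-}_\cG\times\br{-,-}_\cG$; applying the Poisson map $\Psi$ gives $\{\Psi^*f,\pi_2^*\varphi\}=\Psi^*\{f,\mathrm{pr}_2^*\varphi\}=0$ for all $f$, i.e.\ $\{\cF,\cH\}=0$ for all $\cF\in\fF$, $\cH\in\fH$. (Alternatively, this is immediate from property~(1): $\Psi$, and hence every element of $\fF$, is constant along the flow~\eqref{unredsol1} of each element of $\fH$.)

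For condition~(3), the Hamiltonian vector field of $\pi_2^*\varphi$ is~\eqref{unredeq1}, whose integral curve~\eqref{unredsol1} is defined for all $t\in\bR$; equivalently, the joint level sets of $\fF$ are the fibres $\Psi^{-1}(\mathrm{pt})$, which are closed in $G\times\cG$ with the $\cG$-factor pinned to a point and the $G$-factor confined to a closed, hence compact, subset of the compact group $G$, so completeness is automatic by the remark after Definition~\ref{defn:22}. Assembling the three conditions proves the Corollary. I do not anticipate a genuine obstacle here: the statement is essentially a bookkeeping consequence of the Proposition, and the only point requiring a moment's care is the functional dimension of $\fF$ — identifying the span of $\{d\cF:\cF\in\fF\}$ at $p$ with the image of $(D\Psi(p))^{*}$, so that it has dimension $\mathrm{rank}\,D\Psi(p)$, for which property~(4) supplies exactly the needed constant rank on $\pi_2^{-1}(\cG_\reg)$.
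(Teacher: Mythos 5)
Your proposal is correct and follows essentially the same route as the paper's (much terser) proof: the functional dimension of $\fF$ is obtained by identifying the span of $\{d\Psi^*(f)\}$ at $p$ with the image of $(D\Psi(p))^*$, whose dimension is $\operatorname{rank} D\Psi(p)=\dim(\cM)-r$ by property (4), and the inclusion $\fH\subset\fF$ comes from writing $\pi_2^*\varphi=\Psi^*(\mathrm{pr}_2^*\varphi)$, exactly the paper's $\tilde\varphi(X,Y)=\varphi(Y)$. The remaining verifications you supply (abelianness of $\fH$, $\{\fF,\fH\}=0$ via the Casimir/flow argument, completeness from the explicit flow \eqref{unredsol1}) are the details the paper declares "rather obvious" and omits.
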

\begin{proof}
This is rather obvious, so it is enough to present a brief explanation.
The statement about the span of the exterior derivatives $ d\Psi^*(f)$, $f\in C^\infty(\cG \times \cG)$ follows since
\be
(d \Psi^*(f)) (p) = (D \Psi(p))^* df(\Psi(p)), \qquad \forall p\in \pi_2^{-1}(\cG_\reg),
\ee
and the dimension of the image of the  map $(D\Psi(p))^*$ is the same as the rank of $D\Psi(p)$, which
is equal to $(\dim(\cM) -r)$.
To see that $\fH \subset \fF$ holds,  notice that for $\cH(g,J) = \varphi(J)$, i.e., $\cH= \pi_2^*(\varphi)$, we also have
$\cH= \Psi^*(\tilde \varphi )$, where $\tilde \varphi(X,Y):= \varphi(Y)$.
\end{proof}

\begin{rem}
It is worth noting that
the restriction of $\Psi$ on $\pi_2^{-1}(\cG_\reg)$ yields a surjective submersion onto the manifold $\fC_\reg$.
Of course, $\fC_\reg$ can also be viewed as an embedded (but not closed) submanifold of $\cG \times \cG$.
\end{rem}

\section{Integrability on the Poisson quotient of the cotangent bundle}

We start by noting
 that there exist elements of  $\cM$, as well as elements of $\fC_\reg$, whose
 isotropy group with respect to the $G$-action is
given by the center $Z(G)$.
One can see this by direct inspection, or by using
a pair of maximal tori $T$ and $T'$ of $G$ that satisfy  the relations in \eqref{TT}.
Indeed, any pairs of elements of the form
\be
(g,J) \in T_\reg \times \cT'_\reg \subset \cM
\quad\hbox{and}\quad
 (\eta^{-1} J \eta, J) \in  \cT_\reg \times \cT'_\reg \subset \fC_\reg
 \ee
 possess the claimed properties.
Since $Z(G)$ is contained in every isotropy group, the elements having $Z(G)$ as isotropy group
form the dense open, connected subsets of principal isotropy type \cite{DK,GOV,Mich}.
In particular, denoting the isotropy group of $x$ by $G_x$, we have the dense open submanifold
\be
\cM_*:= \{ x \in \cM\mid G_x = Z(G)\}.
\ee
 The adjoint group $\bar G= G/Z(G)$ acts freely on $\cM_*$, and thus
\be
\cM^\red_* := \cM_*/G
\ee
becomes a connected,  smooth Poisson manifold, equipped with the Poisson algebra
\be
C^\infty(\cM^\red_*) \equiv C^\infty(\cM_*)^G.
\ee
It is clear that the flows of $\fH$ \eqref{fHfF} do not leave $\cM_*$.
Since
\be
\fH \subset C^\infty(\cM)^G,
\ee
the restrictions of the elements of $\fH$ descend  to an abelian Poisson algebra on $\cM_*^\red$.
We denote this as
\be
\fH_\red \subset C^\infty(\cM_*^\red).
\label{fHred}\ee
\emph{The functional dimensional of the ring of these reduced Hamiltonians
on $\cM_*^\red$ is  $r$, with \eqref{rdef}.}

In fact,
for any $\cH= \pi_2^* (\varphi)$ with  $\varphi \in C^\infty(\cG)^G$, we have the reduced Hamiltonian
$\cH_\red \in \fH_\red$, which satisfies
\be
\cH= \pi_2^* (\varphi) = \pi^* (\cH_\red),
\ee
where
\be
\pi: \cM_* \to \cM^\red_*
\label{piM*}\ee
 is the canonical projection.
Therefore
\be
d \cH =  \pi_2^* (d \varphi) =   \pi^* (d \cH_\red).
\ee
We know that $(D\pi(x))^*$ is injective for every $x \in \cM_*$. Hence, the linear span of the exterior derivatives
of the elements of $\fH_\red$ at $\pi(x)$ has the same dimension as the linear span of the exterior
derivatives of the elements of $\fH$ at $x$. But the latter is equal to $r$ if $x\in \pi_2^{-1}(\cG_\reg)$.

\begin{rem}
Recall that $\cM_*$ and $\pi_2^{-1}(\cG_\reg)$ are dense open subsets of $\cM$.
Consequently, $\cM_* \cap \pi_2^{-1}(\cG_\reg)$  is a dense open subset of $\cM_*$, and
is also a dense open subset of $\pi_2^{-1}(\cG_\reg)$.
Since we have a principal fiber  bundle, $\pi: \cM_* \to \cM^\red_*$, in particular $\pi$ is both open and continuous, we see
that
\be
(\cM_* \cap \pi_2^{-1}(\cG_\reg))/G \subset \cM^\red_*
\label{dense1}\ee
is dense and open.
\end{rem}

Denote by $\fF^G$ the subring of $G$-invariant elements of $\fF$ \eqref{fHfF}.
This is a Poisson subalgebra of $C^\infty(\cM)^G$, since $\fF$ is a Poisson subalgebra of $C^\infty(\cM)$.
The latter statement holds since $\fF = \Psi^* (C^\infty(\cG \times \cG))$ and $\Psi$ is a Poisson map.
In the end, we shall prove that the restrictions of the elements of $\fF^G \subset C^\infty(\cM)^G$ on $\cM_*$  give rise to a ring of
functional dimension $(\dim(\cM_*^\red)-r)$  on $\cM_*^\red$, which will show the integrability of the reduced
system on the Poisson manifold $\cM_*^\red$. But first we focus on a dense open subset $\cM_{**}^\red \subset \cM_*^\red$, defined below,
which is invariant under all the flows of our interest.

\subsection{Integrability on $\cM_{**}^\red$}

To start, let us introduce the set
\be
\fC_*:= \{ x \in \fC_\reg \mid G_x = Z(G)\} \subset \fC_\reg.
\label{fC*}\ee
This is the union of the principal $G$-orbits in $\fC_\reg$, and is thus a \emph{connected, dense open submanifold} of $\fC_\reg$.
Then, using the map $\Psi$ \eqref{Psi},  define
\be
\cM_{**}:=\Psi^{-1}(\fC_*).
\label{M**}\ee

We remarked before that the restricted map
\be
\Psi:  \pi_2^{-1}(\cG_\reg) \to \fC_\reg
\ee
is a \emph{smooth, surjective submersion}.
Now recall the standard facts that submersions are open maps and the inverse image of a dense set
by an open map is also dense.
This has the following important consequence.

\begin{lem}
The  inverse image $\cM_{**}$ \eqref{M**} of $\fC_*$ \eqref{fC*}
is a dense open subset of $\cM_*$.
 \end{lem}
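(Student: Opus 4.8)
The plan is to show separately that $\cM_{**}$ is open in $\cM_*$ and that it is dense in $\cM_*$, each following from a standard property of the map $\Psi$ restricted to the regular locus.

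First I would establish openness. Since $\fC_*$ is open in $\fC_\reg$ and $\fC_\reg$ carries the subspace topology from $\cG_\reg \times \cG_\reg$ (it is an embedded submanifold there by the Proposition), the preimage $\Psi^{-1}(\fC_*)$ is open in $\pi_2^{-1}(\cG_\reg)$ by continuity of $\Psi$. Because $\pi_2^{-1}(\cG_\reg) = G \times \cG_\reg$ is itself open in $\cM$, and $\cM_*$ is open in $\cM$ as well, the intersection $\cM_{**} = \Psi^{-1}(\fC_*)$ is open in $\cM_*$. Here I should note that $\cM_{**}$ already sits inside $\pi_2^{-1}(\cG_\reg)$, hence inside $\cM_*$ once one checks $\Psi^{-1}(\fC_*) \subset \cM_*$; this last containment is the equivariance point below.

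Next I would establish density. The restricted map $\Psi : \pi_2^{-1}(\cG_\reg) \to \fC_\reg$ is a surjective submersion, as recalled just before the Lemma (it follows from property (4) of the Proposition together with the identification of $\fC_\reg$ with the image). Submersions are open maps. I would then invoke the elementary topological fact that if $q : X \to Y$ is open, continuous and surjective and $D \subset Y$ is dense, then $q^{-1}(D)$ is dense in $X$: given any nonempty open $U \subset X$, $q(U)$ is open and nonempty, so meets $D$, whence $U$ meets $q^{-1}(D)$. Applying this with $q = \Psi|_{\pi_2^{-1}(\cG_\reg)}$ and $D = \fC_*$ — which is dense in $\fC_\reg$ since it is the principal-orbit-type locus of a compact group action on a connected manifold — gives that $\cM_{**}$ is dense in $\pi_2^{-1}(\cG_\reg)$. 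Finally, to pass from density in $\pi_2^{-1}(\cG_\reg)$ to density in $\cM_*$, I would use that $\cM_* \cap \pi_2^{-1}(\cG_\reg)$ is dense open in $\cM_*$ (noted in the Remark after \eqref{piM*}), and that a set dense in a dense subset is dense in the ambient space; combined with $\cM_{**} \subset \cM_*$ this yields density of $\cM_{**}$ in $\cM_*$.

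The one point requiring genuine care — the main obstacle — is the containment $\cM_{**} \subset \cM_*$, i.e.\ that a point $x$ with $\Psi(x) \in \fC_*$ actually has $G_x = Z(G)$. This uses the $G$-equivariance of $\Psi$ (property (2) of the Proposition): equivariance forces $G_x \subset G_{\Psi(x)}$, and since $Z(G) \subset G_x$ always holds while $G_{\Psi(x)} = Z(G)$ by definition of $\fC_*$, we conclude $G_x = Z(G)$, so $x \in \cM_* \cap \pi_2^{-1}(\cG_\reg)$. Everything else is a routine assembly of standard openness/density lemmas for continuous open surjections.
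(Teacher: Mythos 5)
Your proposal is correct and follows essentially the same route as the paper: openness and density of $\cM_{**}$ in $\pi_2^{-1}(\cG_\reg)$ via the surjective submersion $\Psi\vert_{\pi_2^{-1}(\cG_\reg)}$ being an open map, the equivariance argument $G_x < G_{\Psi(x)}$ to get $\cM_{**}\subset\cM_*$, and a final assembly of dense-open inclusions (the paper passes through density in $\cM$ rather than in $\cM_*\cap\pi_2^{-1}(\cG_\reg)$, an immaterial difference).
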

 \begin{proof}
 We give  a proof that displays all details.
 To begin, let us note that we have the chain of dense open subsets
 \be
 \cM_{**} \subset \pi_2^{-1}(\cG_\reg) \subset \cM.
 \label{chain}\ee
 First, $\pi_2^{-1}(\cG_\reg) \subset \cM$ is dense open since $\pi_2: \cM \to \cG$ is a submersion.
 Second,  $\cM_{**}$  can be viewed as the inverse image of $\fC_* \subset \fC_\reg$
 under  the restricted map
 \be
 \Psi\vert_{\pi_2^{-1}(\cG_\reg)}: \pi_2^{-1}(\cG_\reg) \to \fC_\reg,
 \ee
 which is a surjective submersion, implying that  $\cM_{**} \subset \pi_2^{-1}(\cG_\reg)$ is a dense open subset.
 Next, we  observe that the inverse image \eqref{M**} is contained in $\cM_*$.
 This follows since the $G$-equivariance of $\Psi$  entails that $G_x < G_{\Psi(x)}$ for all $x\in \cM$.
Since $G_y = Z(G)$ for $y=\Psi(x)\in \fC_*$, and $Z(G)$ is contained in every isotropy group,  we
must have $G_x = Z(G)$ for $x\in \Psi^{-1}(\fC_*)$.

We see from  \eqref{chain} that $\cM_{**}$ is a dense open subset of $\cM$.
The claim of the lemma follows by combining this with the property that $\cM_{**}$ is contained in the
open  subset $\cM_*$ of $\cM$.
 \end{proof}

The construction ensures that
\be
\cM_{**}^\red:= \cM_{**}/G \quad \hbox{and}\quad
\fC_*^\red:= \fC_*/G
\ee
naturally become smooth Poisson manifolds. If $\psi$ denotes the restriction of $\Psi$ \eqref{Psi} on $\cM_{**}$, and
$p_1$ is the restriction of $\pi$ \eqref{piM*}, then we obtain the commutative diagram of Figure \ref{Diag-DIS}.
The fact that $\psi_\red$  in Figure  \ref{Diag-DIS} is a Poisson map follows from the  isomorphisms
\be
C^\infty(\cM_{**}^\red) \simeq C^\infty(\cM_{**})^G, \qquad C^\infty(\fC_*^\red) \simeq C^\infty(\fC_*)^G
\ee
by using that $\psi$ is a Poisson map and $\psi^*\left(C^\infty(\fC_*)^G\right) \subset C^\infty(\cM_{**})^G$, since $\psi$ is $G$-equivariant.
All the other properties mentioned in the text below the figure can be verified by simple diagram tracing.

\begin{figure}[ht]
\centering
  \captionsetup{width=.8\linewidth}
   \begin{tikzpicture}
 \node (A)  at (-1.7,1.2) {$\cM_{**}$};
 \node (B)  at (1.7,1.2) {$\fC_{*}$};
 \node (C)  at (-1.7,-1.2) {$\cM_{ * *}^{\red}$};
 \node (D)  at (1.7,-1.2) {$\fC_{*}^{\red}$};
  \path[->] (A) edge  node[above] {$\psi$}  (B); \path[->] (B) edge node[right] {$p_2$}  (D);
  \path[->] (A) edge node[left] {$p_1$}  (C); \path[->] (C) edge node[above] {$\psi_{\red}$}  (D);
 \end{tikzpicture}
 \caption{The ingredients of the proof of degenerate integrability on $\cM_{**}^\red$. All four  sets are smooth Poisson manifolds and all  maps are smooth, surjective
  Poisson submersions.
 $\fC_*$ contains those points of $\fC_\reg :=\Psi( \pi_2^{-1}(\cG_\reg)) \subset \fC := \Psi(\cM)$  whose isotropy group is the center
 $Z(G)$ of $G$.
  $\cM_{**}:= \Psi^{-1}(\fC_*)$
 with the $G$-equivariant map $\Psi:\cM \to \cG \times \cG$ that encodes the unreduced constants of motion.
  The map $\psi$ is the restriction of $\Psi$ \eqref{Psi},  $p_1$ and $p_2$ are principal bundle projections. The induced map $\psi_\red$ yields the constants of
   motion for the restricted reduced system \eqref{restred}.
   }
\label{Diag-DIS}
\end{figure}

\begin{lem}\label{lem:43}
The Hamiltonian vector fields generated by the elements of $\fH_\red$ \eqref{fHred} span
an $r$-dimensional subspace of the tangent space at every point of the dense open submanifold
\be
\cM_{**}^\red \subset \cM^\red_*.
\label{dense1+}\ee
\end{lem}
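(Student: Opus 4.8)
The strategy is to transport the statement about Hamiltonian vector fields downstairs through the principal bundle $p_1 \colon \cM_{**} \to \cM_{**}^\red$, reducing everything to the corresponding assertion on $\cM_{**}$, which in turn follows from what we already know on the unreduced side. First I would recall that, by Lemma \ref{lem:43}'s own hypothesis set-up and the earlier discussion of $\fH_\red$, every element of $\fH_\red$ is of the form $\cH_\red$ with $\pi^*(\cH_\red) = \cH = \pi_2^*(\varphi)$ for some $\varphi \in C^\infty(\cG)^G$, and the corresponding unreduced Hamiltonian vector field is given explicitly by \eqref{unredeq1}, namely $X_\cH(g,J) = \bigl((d\varphi(J))g, 0\bigr)$ in the trivialization \eqref{cM}. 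Because $p_1$ is a surjective submersion intertwining the two Poisson structures, it is $p_1$-related Hamiltonian vector fields on the two sides: $Dp_1 \circ X_\cH = X_{\cH_\red} \circ p_1$. Hence the span of $\{X_{\cH_\red}(p_1(x))\}$ as $\cH_\red$ ranges over $\fH_\red$ is exactly the image under $Dp_1(x)$ of the span of $\{X_\cH(x)\}$ as $\cH$ ranges over $\fH$.

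Second, I would identify the span of the unreduced Hamiltonian vector fields at a point $x = (g,J) \in \cM_{**} \subset \pi_2^{-1}(\cG_\reg)$. From the explicit formula above, this span is $\{ (Zg, 0) : Z \in \mathrm{span}\,\{ d\varphi(J) : \varphi \in C^\infty(\cG)^G\} \}$. Since $J$ is regular, the differentials $d\varphi(J)$ of the basic invariants span precisely the $r$-dimensional Cartan subalgebra $\cT_J := \ker(\mathrm{ad}_J)$ (this is the Lie-theoretic fact recalled in Section \ref{S:P} — the differentials of conjugation-invariant functions span an $r$-dimensional space at every regular element, and that space is the centralizer of $J$). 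Right-translation $Z \mapsto (Zg,0)$ is injective, so the span of $\{X_\cH(x) : \cH \in \fH\}$ is an honest $r$-dimensional subspace of $T_x\cM_{**}$.

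Third — and this is the only point requiring genuine care — I must check that $Dp_1(x)$ remains injective when restricted to this particular $r$-dimensional subspace, so that no dimension is lost on passing to the quotient. The kernel of $Dp_1(x)$ is the tangent space to the $G$-orbit through $x$, i.e. the image of the infinitesimal action $\xi \mapsto \bigl([\xi,g], [\xi,J]\bigr) = \bigl((\xi g - g\xi), [\xi,J]\bigr)$ for $\xi \in \cG$ (recall $Z(G)$ acts trivially but $\bar G$ acts freely on $\cM_{**}$, so this kernel is exactly $(\dim G - \dim Z(G))$-dimensional). A nonzero vector $(Zg, 0)$ with $Z \in \cT_J$ lies in this kernel only if there is $\xi$ with $[\xi, J] = 0$ and $\xi g - g\xi = Zg$, i.e. $\xi \in \cT_J$ and $g^{-1}\xi g - \xi = Z$; taking the $\langle -, - \rangle_\cG$-pairing of the last relation with $\xi$ itself and using invariance of the Killing form gives $\langle \xi, Z \rangle_\cG = \langle g^{-1}\xi g, \xi\rangle_\cG - \langle \xi, \xi \rangle_\cG$, which is not immediately zero; the cleaner route is to pair $g^{-1}\xi g - \xi = Z$ with an arbitrary $W \in \cT_J$ and observe that $\langle g^{-1}\xi g, W\rangle_\cG = \langle \xi, gWg^{-1}\rangle_\cG$, so that $Z \perp \cT_J$ would force $Z=0$ — but we need instead to exploit that $x \in \cM_{**}$, i.e. $\Psi(x) = (g^{-1}Jg, J) = (\tilde J, J) \in \fC_*$ has isotropy exactly $Z(G)$, hence $g^{-1}\cT_J g \cap \cT_J = \cT_J \cap \ker(\mathrm{ad}_{\tilde J}) \cap \ker(\mathrm{ad}_J)$ is as small as possible. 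In fact the slick argument is this: if $(Zg,0)$ with $0 \ne Z \in \cT_J$ lay in $\ker Dp_1(x)$, then the flow of the corresponding $\cH \in \fH$ through $x$ would, to first order, coincide with an infinitesimal $G$-action; but the flow \eqref{unredsol1} fixes $J$ and moves $g$ by left multiplication by $\exp(tZ)$, whereas the $G$-action conjugates, and one checks these generate transverse directions at a point whose $\Psi$-image has trivial isotropy modulo the center. I expect \textbf{this transversality check to be the main obstacle}, and the natural way to dispatch it cleanly is to note that $\psi = \Psi|_{\cM_{**}}$ descends to $\psi_\red$ (Figure \ref{Diag-DIS}), that $\fH_\red = \psi_\red^*\bigl(\text{functions of the second }\cG\text{-factor}\bigr)$, and that the flows of $\fH$ act on $\fC_*$ trivially (Proposition, part (1)) — so $X_{\cH_\red}$ is tangent to the fibers of $\psi_\red$; combined with the fact (from the unreduced picture, via the Corollary and the injectivity of $(D\pi)^*$) that $d\cH_\red$ already spans an $r$-dimensional space at $p_1(x)$, a dimension count on the fibers of $\psi_\red$ forces the $r$ vector fields $X_{\cH_\red}$ to be linearly independent there. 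Finally, since $\cM_{**}^\red \cap (\text{image of }\pi_2^{-1}(\cG_\reg))$ is dense and open in $\cM_{**}^\red$ (Remark following \eqref{dense1}, transported along $p_1$), and we have just shown the span is $r$-dimensional on this dense open set — while it is automatically at most $r$-dimensional everywhere, since the $d\cH_\red$ span at most $r$ dimensions and Hamiltonian vector fields kill at least the kernel of that span — the rank is exactly $r$ on a dense open submanifold of $\cM_{**}^\red$, which is the assertion. $\qed$
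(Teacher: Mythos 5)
Your overall strategy is the paper's own: realize the reduced Hamiltonian vector fields at $y=p_1(x)$ as the $Dp_1(x)$-images of the unreduced ones $(Xg,0)$, $X\in\cG_J$ (an $r$-dimensional space since $J\in\cG_\reg$), and then show that this space meets $\ker Dp_1(x)$, i.e.\ the gauge directions $([Y,g],[Y,J])$, only in zero. You correctly isolate this intersection as the crux and you even assemble all the needed ingredients --- $[Y,J]=0$ forces $Y\in\cG_J$, and $\Psi(x)=(\tilde J,J)\in\fC_*$ forces $\cG_{\tilde J}\cap\cG_J=\{0\}$ --- but you then walk away from the argument (``I expect this transversality check to be the main obstacle'') instead of closing it. It closes immediately: either algebraically (the first component of the equality puts $\mathrm{Ad}_g Y$ in $\cG_J\cap\mathrm{Ad}_g(\cG_J)$, which is $\mathrm{Ad}_g$-conjugate to $\cG_J\cap\cG_{\tilde J}=\{0\}$), or, as the paper does it, by applying $D\psi(x)$ to a putative common vector: the Hamiltonian direction maps to $0$ because $\Psi$ is a constant of motion, while the gauge direction maps to $([Y,\tilde J],[Y,J])$, which vanishes only for $Y=0$ since the isotropy group of $(\tilde J,J)\in\fC_*$ is the finite group $Z(G)$.

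The substitute you actually offer for the crux --- ``$d\cH_\red$ spans $r$ dimensions, $X_{\cH_\red}$ is tangent to the $r$-dimensional fibers of $\psi_\red$, so a dimension count forces independence'' --- is not a valid argument. On a Poisson (as opposed to symplectic) manifold, independence of the differentials does not imply independence of the Hamiltonian vector fields: the span of the differentials may intersect the kernel of the Poisson bivector. This is exactly the phenomenon that the restriction from $\cM_*^\red$ to $\cM_{**}^\red$ is designed to exclude; Remark \ref{rem:44} and the $\mathrm{SU}(2)$ example exhibit points of $\cM_*^\red\setminus\cM_{**}^\red$ where $J$ is regular and the $r$ reduced Hamiltonians are still functionally independent, yet the reduced vector field degenerates. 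Your dimension count would ``prove'' the conclusion at those points too, so it cannot be correct. A secondary issue: your closing sentence only establishes the rank ``on a dense open subset of $\cM_{**}^\red$,'' which is weaker than the lemma's claim at \emph{every} point of $\cM_{**}^\red$; the density detour is in any case unnecessary, since $\cM_{**}\subset\pi_2^{-1}(\cG_\reg)$ by construction \eqref{chain}, so $J$ is regular at every point you need.
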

\begin{proof}
   Take an arbitrary point $x=(g,J) \in \cM_{**}$.
   At $y:=p_1(x) \in \cM_{**}^\red$, the values of the Hamiltonian vector fields generated by the elements of $\fH_\red$
   are obtained by applying the derivative $Dp_1(x)$ to the tangent vectors of the form
   \be
   (X g, 0) \in T_x \cM_{**}= T_g G \oplus T_J \cG,
   \label{vect1}\ee
   obtained from the original Hamiltonian vector fields \eqref{unredeq1}.
   The Lie algebra elements $X$ that appear here fill the isotropy subalgebra $\cG_J < \cG$, which is an abelian Lie algebra of dimension $r$ due to the regularity of $J$.
   We show that the subspace of these vectors has zero intersection with $\mathrm{Ker}(D p_1(x))$, which consists of the tangent vectors of the form
   \be
   ([Y,g], [Y,J])\in T_x\cM_{**}, \qquad \forall\, Y\in \cG.
  \label{vect2} \ee
   Indeed, if two tangent vectors having the respective forms \eqref{vect1} and \eqref{vect2} coincide, then so do their
   images in $T_{(\tilde J, J)} \fC_*$ obtained
   by the map $D\psi(x)$ (where $(\tilde J,J) =\psi(g,J)$).
    But the image of the tangent vector in \eqref{vect1}  is zero, while the image of the one in \eqref{vect2} is
   \be
   ([Y, \tilde J], [Y,J])\in T_{\psi(x)}\fC_* \subset  T_{\tilde J} \cG \oplus T_J\cG ,  \quad \hbox{with} \quad \tilde J = g^{-1} J g.
   \label{vect3}\ee
   By the definition of $\fC_*$, the  vector in \eqref{vect3} is equal to zero only for $Y=0$, in which case the vector \eqref{vect2} is also zero.
   Thus, we conclude that the $Dp_1(x)$ image of the vectors in \eqref{vect1} is $r$-dimensional.
\end{proof}

\begin{rem}\label{rem:44} Let us point out an interesting consequence of the proof of the previous lemma.
Take $x \in \pi_2^{-1}(\cG_\reg) \cap \cM_*$ and suppose that the linear span of the
reduced Hamiltonian vector fields arising from $\fH$ is $r_1 < r$ dimensional at $y= \pi(x) \in \cM_*^\red$.
Then, the Lie algebra of $G_{\Psi(x)}$ is at least $(r-r_1)$-dimensional.

The assumption means that $\mathrm{Ker}( D\pi(x))$ intersects the span of the Hamiltonian vectors field of $\fH$
at $x$ in an $(r-r_1)$-dimensional space. That is, there exist $X_1,\dots, X_{r-r_1}$ independent elements
of the Cartan subalgebra $\cG_J$ (for $x=(g,J)$), and corresponding elements $Y_1,\dots, Y_{r-r_1}$ of $\cG_J$, so that
\be
(X_i g, 0) = ([Y_i,g], [Y_i, J]), \qquad \forall i=1,\dots, r-r_1.
\label{fulleq}\ee
The left-hand side represents the values of Hamiltonian vector fields  that project to zero, and the right-hand side
gives infinitesimal gauge transformations. It is the second component of the equality \eqref{fulleq} that forces $Y_i \in \cG_J$, since $J$
is regular. The full equality implies that $[Y_i,\tilde J]=0$, i.e., $Y_i$ belongs to the Lie algebra of $G_{\Psi(x)}$, where $\Psi(x)= (\tilde J,J)\in \fC_\reg$.

We know that  $r$ reduced Hamiltonians are independent at every point of $(\pi_2^{-1}(\cG_\reg) \cap \cM_*)/G$,
because the restrictions of $\pi_2$ and $\pi$ are submersions on this set.
Our assumption means that $y$ is an equilibrium point for $(r-r_1)$ out of the $r$ independent reduced Hamiltonians.
 \end{rem}

Now we formulate our first main result.

\begin{thm}\label{thm:45}
Referring to Figure \ref{Diag-DIS} for the notations, consider the `restricted reduced system'
\be
(\cM_{**}^\red, \br{-,-}_{**}^\red, \fH_\red),
\label{restred}\ee
where $\br{-,-}_{**}^\red$ and $\fH_\red$ are the restrictions of the reduced Poisson brackets and Hamiltonians from $\cM_*^\red$ to
the dense open submanifold $\cM_{**}^\red$. Then, a ring of joint constants of motion for $\fH_\red$ is provided by
\be
\fF_\red:= \psi_\red^*( C^\infty(\fC_*^\red)).
\label{fFred}\ee
The reduced Hamiltonian vector fields associated with $\fH_\red$ span
an $r$-dimensional subspace (with $r=\rank(\cG)$) of the tangent space at every point of $\cM_{**}^\red$,
and the differentials of the elements of $\fF_\red$
span a codimension $r$ subspace of the cotangent space.
If $r>1$, then the quadruple $(\cM_{**}^\red,\br{-,-}^\red_{**}, \fH_\red, \fF_\red)$ gives a degenerate integrable system in the sense
of Definition \ref{defn:23}.  For $r=1$, which occurs for $G=\mathrm{SU(2)}$, the system is `only' Liouville integrable.
\end{thm}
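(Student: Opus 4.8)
My plan is to assemble the statement of Theorem \ref{thm:45} from three ingredients, two of which are already in hand. First, the span of the reduced Hamiltonian vector fields of $\fH_\red$ is $r$-dimensional at every point of $\cM_{**}^\red$: this is exactly Lemma \ref{lem:43}, so nothing new is needed there. Second, I need the differentials of the elements of $\fF_\red = \psi_\red^*(C^\infty(\fC_*^\red))$ to span a codimension-$r$ subspace of each cotangent space of $\cM_{**}^\red$. The natural route is to trace Figure \ref{Diag-DIS}: since $p_1 \colon \cM_{**} \to \cM_{**}^\red$ is a submersion, $(Dp_1(x))^*$ is injective, so the span of $d\fF_\red$ at $p_1(x)$ has the same dimension as the span of the pulled-back differentials $d(\psi^* f) = (D\psi(x))^* (df)(\psi(x))$ at $x \in \cM_{**}$; and since $\psi$ is a surjective submersion onto $\fC_*$, the dimension of the image of $(D\psi(x))^*$ equals $\dim \fC_* = \dim \cM - r$ (the rank of $D\Psi$ on $\pi_2^{-1}(\cG_\reg)$, from Proposition, property (4)). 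Descending, this gives $\dim \cM_{**}^\red - r$, i.e.\ codimension $r$. I would also record here that $\fF_\red$ is a Poisson subalgebra and that $\fH_\red \subset \fF_\red$: the inclusion comes from the observation already used in the Corollary that $\cH = \pi_2^*(\varphi) = \Psi^*(\tilde\varphi)$ with $\tilde\varphi(X,Y) = \varphi(Y)$, which is $G$-invariant, so $\cH_\red$ lies in $\psi_\red^*(C^\infty(\fC_*^\red))$; commutativity $\{\fF_\red, \fH_\red\} = 0$ follows because $\Psi$ is constant along the unreduced flows (property (1) of the Proposition), hence $\psi_\red$ is constant along the reduced flows.

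The third ingredient is the completeness of the Hamiltonian vector fields of $\fH_\red$: this holds because the unreduced flows \eqref{unredsol1} are manifestly complete on all of $\cM$, they preserve $\cM_{**}$ (since $\Psi$, and hence the stratification by isotropy type of $\Psi(x)$, is flow-invariant), and completeness descends under the projection $p_1$. With the functional dimensions $r$ and $\dim\cM_{**}^\red - r$ verified, and $r < m := \tfrac12(\dim\cM_{**}^\red)$ when $r > 1$ (here $\dim \cM_{**}^\red$ equals $\dim\cM - 2\dim\bar G$, which one checks exceeds $2r$ precisely when $\rank(\cG) > 1$), Definition \ref{defn:23} is satisfied and the quadruple is a degenerate integrable system. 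For $r = 1$, i.e.\ $G = \mathrm{SU}(2)$, one has $\dim \cM = 6$, $\dim \bar G = 3$, so $\dim \cM_{**}^\red = 0$ would be too crude — rather $\cM_*^\red$ is $2$-dimensional and $r = m = 1$, so the inequality $r < m$ fails and the system is Liouville integrable instead; I would just remark that in this case $\fH_\red$ and $\fF_\red$ have the same functional dimension $1 = m$.

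The step I expect to carry the real weight is the rank computation for $\fF_\red$, specifically confirming that passing to $G$-invariants does not drop the functional dimension: a priori $\fF^G$ could have smaller functional dimension than $\fF$. The clean way around this is the diagram in Figure \ref{Diag-DIS} together with the isomorphism $C^\infty(\cM_{**}^\red) \simeq C^\infty(\cM_{**})^G$ and $C^\infty(\fC_*^\red) \simeq C^\infty(\fC_*)^G$ — because $p_1$ and $p_2$ are genuine principal bundle projections on the principal-isotropy loci, the functional dimension downstairs matches that upstairs, and $\psi_\red$ being a surjective submersion of the quotients transfers the codimension-$r$ statement directly. So the obstacle is really bookkeeping about quotients by the free $\bar G$-action rather than anything geometrically deep; once the diagram is in place the argument is a short chase. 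I would also double-check the numerical inequality $\dim\cM - 2\dim\bar G > 2r$ for $\rank(\cG) > 1$ using $\dim\cM = 2\dim G$ and $\dim G = \dim\bar G$, i.e.\ $\dim G > 2r$, which holds for every compact simple $\cG$ except $\mathfrak{su}(2)$.
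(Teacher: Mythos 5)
Most of your proposal follows the paper's own route: the constants of motion come from chasing the commutative diagram of Figure \ref{Diag-DIS} together with the fact that $\Psi$ is constant along the unreduced flows; the functional dimension of $\fF_\red$ is obtained from the submersion properties of $p_1$, $p_2$, $\psi$ (the paper simply notes that $\psi_\red$ is a surjective Poisson submersion onto $\fC_*^\red$ and computes $\dim(\fC_*^\red)=\dim(G)-r$); the span of the reduced Hamiltonian vector fields is exactly Lemma \ref{lem:43}; and $\fH_\red\subset\fF_\red$ via $\pi_2^*(\varphi)=\Psi^*(\tilde\varphi)$. Your explicit treatment of completeness is a welcome addition that the paper leaves implicit.

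The genuine gap is in the last step, the verification of $r<m$. In Definition \ref{defn:23} the number $2m$ is the \emph{maximal rank of the Poisson tensor} on a dense open subset, not the dimension of the manifold. Since $\cM_{**}^\red$ is a genuinely Poisson, non-symplectic manifold, you must first determine the generic symplectic leaf dimension; the paper does this via Lemma \ref{lem:denseleaves}, which shows that a dense open subset of $\cM_{**}^\red$ is a union of leaves of codimension $r$, so that $m=\frac12\bigl(\dim(G)-r\bigr)$ and the condition to check is $r<\frac12\bigl(\dim(G)-r\bigr)$, i.e.\ $\dim(G)>3r$ --- true for every compact simple $\cG$ of rank greater than one, and failing (with equality) exactly for $\mathfrak{su}(2)$. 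Your substitute $m:=\frac12\dim\cM_{**}^\red$ and the resulting test $\dim(G)>2r$ are wrong: $\dim(G)>2r$ also holds for $\mathfrak{su}(2)$ (namely $3>2$), so your criterion would fail to single out the $r=1$ case as merely Liouville integrable. Relatedly, your dimension count $\dim\cM_{**}^\red=\dim\cM-2\dim\bar G$ is incorrect: the Poisson quotient by the free $\bar G$-action has dimension $\dim\cM-\dim\bar G=\dim G$ (equal to $3$, not $0$ or $2$, for $\mathrm{SU}(2)$); it is the symplectic leaves, not $\cM_*^\red$ itself, that are two-dimensional in that case. To close the gap you need the leaf-codimension statement of Lemma \ref{lem:denseleaves}, or an equivalent computation of the generic rank of the reduced Poisson structure, before the inequality $r<m$ can be checked.
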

\begin{proof}
Basically, all has already been said, but let us summarize the salient points.

 Consider a reduced Hamiltonian $\cH_\red\in \fH_\red$, defined by the relation
 \be
 \cH_\red \circ p_1 =  \cH\vert_{\cM_{**}}
 \quad\hbox{with some}\quad \cH\in \fH= \pi_2^*\left(C^\infty(\cG)^G\right). \ee
 The integral curves of $\cH_\red $ are of the form
 $p_1(x(t))$, where $x(t)$ is an integral curve of $\cH\in \fH$ in $\cM_{**}$.
 Figure \ref{Diag-DIS}  shows that
 \be
 \psi_\red (p_1(x(t))) = p_2 (\psi(x(t)))
 \ee
 is constant in $t$, since $\psi$ is constants along the original integral curve $x(t)$.
 It follows  that
 $\fF_\red$  \eqref{fFred} gives constants of motion.
We know that $\psi_\red$
 is a Poisson map,  and this implies that $\fF_\red$ forms a closed Poisson algebra.
  The dimension of the span of the differentials  of the elements of $\fF_\red$ at every point of $\cM_{**}^\red$ is equal to
   $\dim ( \fC_*^\red)$, because $\psi_\red$ is a  submersion. We have
   \be
   \dim ( \fC_*^\red) = \dim(\fC_*) - \dim(G) =( \dim(\cM_{**}) - r) - \dim(G) = \dim(\cM_{**}^\red) - r = \dim(G) -r. \ee
   It is obvious that $\fH_\red$ is contained in  the center of the Poisson algebra $\fF_\red$.

The differentials of the reduced Hamiltonians span an $r$-dimensional subspace of
   $T_y^* \cM_{**}^\red$ at every $y\in \cM_{**}^\red$, since their Hamiltonian vector fields span an $r$-dimensional subspace of $T_y \cM_{**}^\red$.
   Finally, we note that $r=\rank(\cG)$ is strictly smaller than half the maximal dimension of the symplectic leaves
   in $\cM_{**}^\red$,  if $r>1$.
    In fact, it follows from Lemma \ref{lem:denseleaves} given below that a dense open subset of $\cM_{**}^\red$ is filled by
   symplectic leaves of maximal dimension, which have codimension $r$.  If $r\neq 1$, then we see that
   \be
   r < \frac{1}{2} \left(\dim(G) - r\right) = \frac{1}{2}( \dim(\cM_{**}^\red) - r).
   \ee
   In the $r=1$ case we obtain equality instead of the above inequality, and then the restricted reduced system is `only'
   Liouville integrable.
   \end{proof}

\subsection{Integrability on the maximal symplectic leaves of $\cM_{**}^\red$}

Let us consider the restrictions of the moment map $\Phi$ \eqref{PhimomT},
\be
\Phi_*: \cM_* \to \cG
\quad\hbox{and}\quad
\Phi_{**}: \cM_{**} \to \cG.
\ee
These are submersions, because the action of $G/Z(G)$ on $\cM_*$, as well as on its $G$-stable submanifold $\cM_{**}$, is free.
Using again that submersions are open maps, we obtain that
\be
\Phi_*^{-1}(\cG_\reg)\subset \cM_*
\quad\hbox{and}\quad
\Phi_{**}^{-1}(\cG_\reg) \subset \cM_{**}
\ee
are dense open subsets.
According to the  theory of symplectic reduction \cite{OR}, the symplectic leaves of $\cM_*^\red$ and $\cM_{**}^\red$ are the
connected components of the reduced spaces
\be
\Phi_*^{-1}(\cO)/G
\quad \hbox{and}\quad
\Phi_{**}^{-1}(\cO)/G,
\label{orbred}\ee
where $\cO$ is such a $G$-orbit in $\cG$ whose respective inverse image is not empty.
See also  Remark \ref{rem:rem48} below.

\begin{lem}\label{lem:denseleaves}
The open dense subsets
\be
\Phi_*^{-1}(\cG_\reg)/G \subset\cM_*^\red
\quad\hbox{and}\quad
\Phi_{**}^{-1}(\cG_\reg)/G \subset \cM_{**}^\red
\label{regred}\ee
are unions of symplectic leaves of codimension  $r$, which are the connected components of the manifolds \eqref{orbred}
associated with the coadjoint orbits contained in $\cG_\reg$ (that have codimension  $r$ in $\cG^*\simeq \cG$).
\end{lem}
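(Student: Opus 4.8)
The plan is to verify that the reduced spaces $\Phi_*^{-1}(\cO)/G$ and $\Phi_{**}^{-1}(\cO)/G$ corresponding to orbits $\cO \subset \cG_\reg$ are smooth of the claimed codimension, and that they exhaust the dense open sets in \eqref{regred}. First I would recall from the theory of symplectic reduction \cite{OR} that, since $G/Z(G)$ acts freely on $\cM_*$ (and on $\cM_{**}$), the symplectic leaves of $\cM_*^\red$ and $\cM_{**}^\red$ are precisely the connected components of the sets \eqref{orbred}, and that the latter are smooth symplectic manifolds. The key point is then a dimension count: for $\cO = \cO_G(\xi)$ with $\xi \in \cG_\reg$, the restricted moment map $\Phi_*$ is a submersion onto its image, so $\Phi_*^{-1}(\cO)$ is a submanifold of $\cM_*$ of codimension $\dim(\cG) - \dim(\cO) = r$, because a regular coadjoint orbit has codimension $r$ in $\cG^* \simeq \cG$. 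Quotienting by the free $\bar G$-action lowers dimension by $\dim(G)$ in both $\Phi_*^{-1}(\cO)$ and in $\cM_*$ itself, so the codimension of $\Phi_*^{-1}(\cO)/G$ inside $\cM_*^\red$ remains $r$; the identical argument applies with $\cM_{**}$ in place of $\cM_*$.

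Next I would establish that these codimension-$r$ leaves are the maximal ones and that their union is dense and open. Openness and density of $\Phi_*^{-1}(\cG_\reg)/G$ and $\Phi_{**}^{-1}(\cG_\reg)/G$ follow exactly as in the excerpt: $\Phi_*$ and $\Phi_{**}$ are submersions, hence open maps, so $\Phi_*^{-1}(\cG_\reg)$ and $\Phi_{**}^{-1}(\cG_\reg)$ are dense open (as $\cG_\reg$ is dense open in $\cG$), and the bundle projections $\pi$, $p_1$ are open continuous maps, so the quotients are dense open in $\cM_*^\red$, $\cM_{**}^\red$ respectively. For maximality one notes that along $\Phi^{-1}(\cO)$ the dimension of the orbit $\cO$ is as large as possible precisely when $\cO$ is a regular (co)adjoint orbit, and the codimension of $\Phi^{-1}(\cO)/G$ in the quotient equals $\dim(\cG) - \dim(\cO)$, which is minimized (equal to $r$) exactly on $\cG_\reg$; hence these leaves have maximal dimension, and since they are indexed by a dense open set of orbit values they fill a dense open subset.

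Finally I would record that each such leaf is genuinely a union of the asserted form: $\Phi_*^{-1}(\cO)/G$ need not be connected, so a symplectic leaf is a connected component of it, but every connected component has the same dimension, so the whole set $\Phi_*^{-1}(\cG_\reg)/G$ is a disjoint union of codimension-$r$ symplectic leaves, and similarly for $\cM_{**}^\red$. The main obstacle I anticipate is purely bookkeeping: one must be careful that "codimension $r$" is computed consistently — i.e.\ that the codimension of $\Phi^{-1}(\cO)$ in $\cM_*$ and the codimension of its quotient in $\cM_*^\red$ agree — which works because the $\bar G$-action is free on all the spaces involved, so quotienting subtracts $\dim(G)$ uniformly; and one should invoke that $\cM_*$, $\cM_{**}$ carry maximal-rank Poisson structure on a dense open set only through the already-established fact that $\Phi_*$, $\Phi_{**}$ are submersions onto a neighbourhood of $\cG_\reg$, which is where the regularity of $J$ (not merely freeness of the action) enters via Remark \ref{rem:rem48}.
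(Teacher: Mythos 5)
Your argument is correct, but it reaches the codimension count by a different route than the paper. You treat $\Phi_*^{-1}(\cO)$ as the preimage of the codimension-$r$ submanifold $\cO\subset\cG_\reg$ under the submersion $\Phi_*$, and then observe that passing to the quotient by the free $\bar G$-action subtracts $\dim(G)$ from both the leaf and the ambient space, so the codimension $r$ survives; the same count is run on $\cM_{**}$. The paper instead works with functions: it picks generators $C_1,\dots,C_r$ of $\bR[\cG]^G$, notes that they are independent at every point of $\cG_\reg$, pulls them back along $\Phi$ to $G$-invariant functions descending to $\cC_i^*$ on $\cM_*^\red$ (and $\cC_i^{**}$ on $\cM_{**}^\red$), checks that these remain independent over $\cG_\reg$ because $\Phi_*$, $\Phi_{**}$ and the bundle projections are submersions, and then exhibits each reduced space $\Phi_*^{-1}(\cO)/G$ as a joint level set of the $r$ independent functions $\cC_i^*$. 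The two arguments are essentially dual formalizations of the same transversality statement, and both rest on the fact (quoted from \cite{OR} just before the lemma) that the symplectic leaves are the connected components of the spaces \eqref{orbred}. Yours is marginally shorter for the lemma itself; the paper's version has the advantage of producing the explicit reduced Casimirs $\cC_i^{**}$, which are reused immediately afterwards as part of the $2r$ functions \eqref{2rfun} whose independence underlies Proposition \ref{prop:leaf}. Your added remarks on openness, density and maximality are fine but not needed here: density and openness of the sets \eqref{regred} are established in the text preceding the lemma, and maximality of these leaves is only invoked later.
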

\begin{proof}
Let $C_i$ $(i=1,\dots, r)$ be a generating set of the $G$-invariant polynomials on $\cG$.
Recall that these functions are independent at every point of $\cG_\reg$.
The functions $C_i \circ \Phi$ are $G$-invariant, and descend to smooth functions $\cC_i^*$ on $\cM_*^\red$. We denote
the restricted functions on $\cM_{**}^\red$ as $\cC_i^{**}$.
It follows by using the submersion properties of $\Phi_*$ and the bundle projections $\pi$ and $p_1$, that each of the four sets of $r$-functions
\be
C_i \circ \Phi_*, \quad C_i \circ \Phi_{**}, \quad \cC_i^*, \qquad \cC_i^{**},\qquad i=1,\dots, r,
\label{4sets}\ee
are independent at every point of the corresponding subsets associated with $\cG_\reg$.
It is known  that
every orbit  $\cO\subset \cG$ is a joint level surface of the Casimir functions $C_i\in C^\infty(\cG)^G$.
The corresponding submanifolds $\Phi_*^{-1}(\cO)$ and $\Phi_*^{-1}(\cO)/G$ are joint
level surfaces of the functions $C_i \circ \Phi_*$ and $\cC_i^*$, respectively.
The analogous statements hold for $\Phi_{**}^{-1}(\cO)$ and $\Phi_{**}^{-1}(\cO)/G$ with the functions
$C_i \circ \Phi_{**}$ and $\cC_i^{**}$.
It is a straightforward consequence of the independence of the sets of functions $\{\cC_i^*\}$ and $\{\cC_i^{**}\}$
that the open dense subsets \eqref{regred} are composed of symplectic
leaves of codimension $r$.
\end{proof}

The moment map $\Phi$ can be represented as
\be
\Phi = \mu \circ \Psi
\quad\hbox{with}\quad
\mu: \cG \times \cG \to \cG
\quad\hbox{for which}\quad
\mu(X,Y):= Y -X.
\ee
Here, $\mu$ is the moment map for the diagonal conjugation action of $G$
on $\cG \times \cG$, equipped with the minus Lie--Poisson bracket on the first factor.
This implies that the Casimir functions $\cC_i^{**} \in C^\infty(\cM_{**}^\red)$ are elements of $\fF_\red$ \eqref{fFred}.
Note also that the reduced Hamiltonians  $\cH_i^\red \in C^\infty(\cM_{**}^\red)$ defined by
\be
\cH_i^\red \circ p_1  =  C_i \circ \pi_2
\ee
are elements of $\fH_\red \subset \fF_\red$.
It is easily seen (by using Lemma \ref{lem:43}) that the $2r$ functions
\be
\cC_1^{**},\dots, \cC_r^{**}, \cH_1^\red, \dots, \cH_r^\red
\label{2rfun}\ee
are independent at every point of $\Phi_{**}^{-1}(\cG_\reg) /G$.

\begin{prop}\label{prop:leaf}
If $r=\rank(\cG)>1$, then the restrictions of $\fH_\red$ and $\fF_\red$ to any symplectic leaf of codimension $r$ in $\cM_{**}^\red$
yield a degenerate integrable system in the sense of Definition \ref{defn:22}.
\end{prop}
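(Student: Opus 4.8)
The plan is to verify the three conditions of Definition \ref{defn:22} for the restricted rings on a fixed symplectic leaf $\cL \subset \cM_{**}^\red$ of codimension $r$, using the structure already established. Write $\fH_\red^\cL$ and $\fF_\red^\cL$ for the rings obtained by restricting the elements of $\fH_\red$ and $\fF_\red$ to $\cL$. By Lemma \ref{lem:denseleaves}, such a leaf is a connected component of $\Phi_{**}^{-1}(\cO)/G$ for some coadjoint orbit $\cO \subset \cG_\reg$, and $\dim \cL = \dim(\cM_{**}^\red) - r = \dim(G) - 2r$, so $\cL$ is symplectic of dimension $2m$ with $m = \tfrac12\dim(G) - r$.

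First I would compute the functional dimension of $\fH_\red^\cL$. Since the Casimir functions $\cC_i^{**}$ are constant on $\cL$ (being, up to the orbit's fixed Casimir values, the defining functions of the leaf), the restriction map kills no information carried by $\fH_\red$ transverse to the constraint, and one checks — using that the $2r$ functions \eqref{2rfun} are independent at every point of $\Phi_{**}^{-1}(\cG_\reg)/G$ — that the $r$ reduced Hamiltonians $\cH_i^\red$ remain functionally independent when restricted to $\cL$. Hence $\fH_\red^\cL$ has functional dimension $r$. Next, the functional dimension of $\fF_\red^\cL$: on the ambient manifold $\fF_\red$ has functional dimension $\dim(\fC_*^\red) = \dim(G) - r$ by Theorem \ref{thm:45}, and the $r$ Casimirs $\cC_i^{**}$ lie in $\fF_\red$ and become constant on $\cL$; so the restricted ring has functional dimension at most $(\dim(G)-r) - r = \dim(G) - 2r = \dim \cL$. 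The reverse inequality is automatic since the functional dimension of any ring of functions on $\cL$ containing the independent set $\{\cH_i^\red\vert_\cL\}$ together with enough further constants-of-motion coordinates cannot exceed $\dim \cL$, and equals it once we exhibit $\dim\cL$ independent elements; these are supplied by pulling back, via $\psi_\red$ restricted to $\cL$, a maximal independent set of functions on the corresponding leaf of $\fC_*^\red$. Thus $r + \dim\cL = \dim\cL + r = \dim(\cM_{**}^\red) = 2m + r$? — here one must be careful: the matching required by Definition \ref{defn:22}(1) is $r + s = \dim \cL = 2m$ with $s = \dim\cL - r$, which is exactly what the count above gives, and $r < m$ follows from the inequality $r < \tfrac12(\dim(G)-r)$ already derived in the proof of Theorem \ref{thm:45} for $r>1$.

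Condition (2) is inherited directly: $\fH_\red \subset \fF_\red$ and $\{\fF_\red,\fH_\red\}=0$ on $\cM_{**}^\red$ by Theorem \ref{thm:45}, and since $\cL$ is a symplectic leaf the Poisson bracket on $\cM_{**}^\red$ restricts to the leaf bracket, so both the inclusion and the commutation pass to $\fH_\red^\cL \subset \fF_\red^\cL$. Condition (3), completeness of the Hamiltonian vector fields of $\fH_\red^\cL$, follows because these are the restrictions to the $G$-invariant, flow-invariant set $\cM_{**}$ of the explicit flows \eqref{unredsol1}, which are complete on all of $\cM = G \times \cG$ (the $J$-component is constant and the $g$-component is a one-parameter subgroup translate, defined for all $t$), then pushed down by the bundle projection $p_1$ and further restricted to the leaf; restriction of a complete vector field to an invariant submanifold stays complete. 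Finally, for the ``degenerate'' (as opposed to merely Liouville) character one invokes $r<m$ from the paragraph above, valid precisely because $r>1$.

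The main obstacle I expect is the bookkeeping in condition (1): one must show cleanly that restricting to the leaf drops the functional dimension of $\fF_\red$ by exactly $r$ (no more), i.e.\ that the $r$ Casimirs $\cC_i^{**}$ are, generically on $\cL$, the \emph{only} independent relations among the elements of $\fF_\red$ that become trivial upon restriction. This is where the independence of the $2r$ functions \eqref{2rfun} on the dense open set $\Phi_{**}^{-1}(\cG_\reg)/G$ does the real work, combined with the submersivity of $\psi_\red$; concretely, at a generic point of $\cL$ the differentials of $\fF_\red$ span a codimension-$r$ subspace of the cotangent space of $\cM_{**}^\red$ whose intersection with the conormal to $\cL$ is exactly the $r$-dimensional span of $d\cC_i^{**}$, so the image in $T^*\cL$ has dimension $(\dim(G)-r) - r = \dim\cL - r$, giving $s = \dim\cL - r$ and hence $r+s = \dim\cL$ as required. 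Everything else is a transcription of Theorem \ref{thm:45} and Lemma \ref{lem:denseleaves} to the leaf.
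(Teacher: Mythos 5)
Your argument follows essentially the same route as the paper's: the entire content of the proposition is that the functional dimension of $\fF_\red$ drops by exactly $r$ upon restriction to a codimension-$r$ leaf, and you derive this from the same three ingredients the paper uses — the everywhere $(\dim(G)-r)$-dimensional span of the differentials of $\fF_\red$ on $\cM_{**}^\red$, the fact that the Casimirs $\cC_i^{**}$ belong to $\fF_\red$ and their differentials span the conormal of the leaf, and the independence of the $2r$ functions \eqref{2rfun}. The paper packages this as an adapted local coordinate system around a point of the leaf, while you phrase it as a conormal-space computation (the span of $d\fF_\red$ meets the conormal exactly in the span of the $d\cC_i^{**}$); these are equivalent, and your formulation is if anything a little cleaner. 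Your handling of conditions (2) and (3) is correct and more explicit than the paper's, which does not address them.

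There is, however, an arithmetic slip that you must repair because it makes your final inequality check unsound as written. The paper computes $\dim(\cM_{**}^\red)=\dim(G)$ in the proof of Theorem \ref{thm:45}, so a codimension-$r$ leaf has dimension $\dim(G)-r$, not $\dim(G)-2r$, and hence $m=\tfrac{1}{2}\bigl(\dim(G)-r\bigr)$, not $\tfrac{1}{2}\dim(G)-r$. With the correct value everything you computed is consistent: $s=\dim(G)-2r$ equals $\dim(G)-r$ minus $r$, so $r+s$ equals the leaf dimension, and $r<m$ is precisely the inequality $3r<\dim(G)$ established in the proof of Theorem \ref{thm:45} for $r>1$. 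With your stated $m$, the requirement $r<m$ would read $4r<\dim(G)$, which does not follow from the inequality you cite and in fact fails already for $G=\mathrm{SU}(3)$; moreover your intermediate assertion that the functional dimension of the restricted $\fF_\red$ equals the leaf dimension contradicts your own later (correct) statement that $s$ equals the leaf dimension minus $r$. The mathematics underneath is right; the bookkeeping is not.
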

\begin{proof}
Take  a regular orbit $\cO$ for which
\be
\Phi_{**}^{-1}(\cO)/G
\label{regleaf}\ee
is not empty. Choosing an arbitrary point $y \in \Phi_{**}^{-1}(\cO)/G$, we can find elements of $\fF_\red$  ,
say
\be
F_{1},\dots, F_{D}, \quad  D= \dim(\cM_{**}^\red) -3r
\ee
such that
\be
\cC_1^{**}, \dots, \cC^{**}_r, \cH_1^\red,\dots, \cH_r^\red, F_1, \dots, F_{D}
\label{2r+Dfun}\ee
are independent at $y$, This holds since there exist $\dim(\cM_{**}^\red) -r$ independent elements of $\fF_\red$
at every point of $\cM_{**}^\red$, and we know that the elements \eqref{2rfun} of $\fF_\red$ are independent at $y$.
According to standard results on manifolds, one can complement these functions to a local coordinate system
on some neighbourhood of $y$. In other words, there exists an open subset  $U\subset \cM_{**}^\red$ containing $y$, and $r$ smooth functions
$x_1,\dots, x_r$ on $U$, so that the restrictions of the functions \eqref{2r+Dfun} on $U$ together with the $x_i$ form local coordinates on $U$.
But then the restriction of the functions
 \be
 \cH_1^\red, \dots,\cH_r^\red, F_{1}, \dots, F_{D}, x_1,\dots, x_r
 \ee
 on $U \cap \Phi_{**}^{-1}(\cO)/G$ defines a local coordinate system on $\Phi_{**}^{-1}(\cO)/G$.
 In particular, the functional dimension of $\fF_\red$ restricted on $\Phi_{**}^{-1}(\cO)/G$
 is equal to the dimension of this symplectic manifold minus $r$.
 \end{proof}

\begin{rem}\label{rem:rem48}
Let us explain that  $\Phi^{-1}(\cO) \neq \emptyset$ for any orbit $\cO\subset \cG$, and
\be
\Phi^{-1}(\cO)\cap \cM_* \equiv \Phi^{-1}_*(\cO) \neq \emptyset,
\qquad \forall \cO \subset \cG_\reg.
\ee
For the proof, take two maximal tori $T$ and $T'$ subject to the relations \eqref{TT} and consider the equation
\be
\Phi(g,J) = J - g^{-1} J g = \zeta
\quad \hbox{with arbitrarily given}\quad g\in T_\reg, \, \zeta \in \cO\cap \cT',
\label{442}\ee
where the adjoint action is denoted by conjugation, as before.
 One sees from the orthogonality between $\cT$ and $\cT'$ \eqref{TT}  that  equation \eqref{442} admits a unique solution
for $J$. This proves the first statement, since every orbit intersects $\cT'$.
(We notice from Theorems 2.6 and 6.3 in \cite{Knop} that the full quotient spaces $\Phi^{-1}(\cO)/G \subset \cM/G$ are always connected.)
In the case of $\cO \subset \cG_\reg$, and thus $\zeta$ in $\cT'_\reg$,
we obtain that   $G_{(g,J)}  < T$ and $G_\zeta = T'$.
Using that $T\cap T' = Z(G)$ \eqref{TT}, this implies that $G_{(g,J)} = Z(G)$, since in general $G_{(g,J)} < G_{\Phi(g,J)}$. Thus the second statement follows.
For any orbit $\cO\subset \cG$, we have
\be
\dim (\Phi_{*}^{-1}(\cO)/G) = \dim(\cO),
\label{genorb}\ee
and also $\dim (\Phi_{**}^{-1}(\cO)/G) = \dim(\cO)$, if  $\Phi_{*}^{-1}(\cO) \neq \emptyset$ and $\Phi_{**}^{-1}(\cO) \neq \emptyset$.
One can determine the dimensions, for example,  by identifying these manifolds with reduced phase spaces arising from corresponding
Marsden--Weinstein type point reductions \cite{OR}.
We conjecture that every maximal symplectic leaf of $\cM_*^\red$ intersects $\cM_{**}^\red$ and the intersection
 gives a dense subset of the leaf in question.
This issue, and also the question if the same holds for arbitrary symplectic leaves of $\cM_*^\red$, deserves further study.
\end{rem}

\subsection{The example of $G=\mathrm{SU}(2)$}

In the $r=1$ case, i.e. for $G=\mathrm{SU}(2)$,
 $\fF_\red$ \eqref{fFred} is generated by $\cC_1^{**}$ and $\cH_1^\red$ that
stem from the Casimir function $C_1(X) = \langle X,X \rangle_\cG$.
The symplectic leaves in $\cM_{**}^\red$ are two dimensional, and support Liouville integrable systems.
Concretely,  these represent the center of mass version of the $2$-particle trigonometric
Sutherland model (given by the Hamiltonian \eqref{Suth} below) at different values of the coupling constant, and
their equilibrium points  belong to $\cM_*^\red \setminus \cM_{**}^\red$.

Now we explain the above statements. To start,
observe that in this case all elements of $G$ are regular, except the elements of $Z(G) = \{ \1_2, -\1_2\}$,
and also all elements of $\cG$ are regular except the zero element.
Note that  $\Phi^{-1}(0)$ never   intersects $\cM_{**}$.

Let us restrict to the dense open submanifold $\Phi^{-1}(\cG_\reg)$, which simply means
that we exclude the zero value of the moment map.
It is easily seen that all $G$-orbits in $\Phi^{-1}(\cG_\reg)$ admit a \emph{unique} representative
$(g,J)$ for which we have
\be
g = \mathrm{diag} (e^{\ri q}, e^{-\ri q}), \quad\hbox{with some}\quad  0<q < \pi
\ee
and
\be
J = \ri p (E_{11} - E_{22}) + \ri x \left( \frac{1}{(e^{2\ri q} -1)} E_{12} + \frac{1}{(e^{-2\ri q} -1)} E_{21} \right),
\ee
where $p$ is an arbitrary real number, and $x>0$ is a constant that parametrizes the non-zero coadjoint orbits of $G$.
The corresponding moment map value reads
\be
\Phi(g,J) = \ri x (E_{12} + E_{21}).
\ee
One can also calculate that
\be
 \tilde J=g^{-1} J g= \ri p (E_{11} - E_{22}) + \ri x \left( \frac{e^{-2\ri q}}{(e^{2\ri q} -1)} E_{12} + \frac{e^{2\ri q}}{(e^{-2\ri q} -1)} E_{21} \right).
 \ee
The isotropy groups are as follows:
\be
G_{(g,J)} = Z(G), \qquad \forall q, p,
\ee
and
\be
G_{(\tilde J, J)} = Z(G) \qquad \hbox{if}\quad (p,q)\neq (0, \frac{\pi}{2}).
\ee
In the case $(p,q) = (0, \frac{\pi}{2})$ we see that $J$ and $\tilde J$ are proportional to each other, and then
\be
G_{(\tilde J, J)} = G_J,
\ee
which is a maximal torus.

In our case $\fH_\red$ is generated by the single Hamiltonian
\be
\cH_\red = - \frac{1}{4} \tr(J^2) = \frac{1}{2} p^2 + \frac{1}{8} \frac{x^2}{\sin^2q}.
\label{Suth}\ee
It turns out that $q$ and $p$ are  Darboux variables.
The special value $(p,q)= (0, \frac{\pi}{2})$ corresponds to the global minimum of $\cH_\red$.
In particular, we observe that $\cM_*^\red \setminus \cM_{**}^\red$ is not empty, and contains the `interesting points'
regarding the reduced dynamics. Remember also Remark \ref{rem:44}.

It is instructive to check  that the value of the original Hamiltonian vector field on $\cM_*$
taken at the exceptional points $(g,J)\in (\cM_*\setminus \cM_{**})$  parametrized by $(p,q)=(0,\frac{\pi}{2})$
projects to zero on the reduced phase space.

\subsection{Integrability on $\cM_*^\red$}

Consider the ring of $G$-invariant polynomials on $\cG\times \cG$, denoted $\bR[\cG \times \cG]^G$,
which is a finitely generated polynomial ring, and at the same time is
a Poisson subalgebra of $C^\infty(\cG \times \cG)^G$.
As a consequence of general results  described in the appendix, the linear
span of the differentials of the invariant polynomials is \emph{the same},
 at every
point of $\cG \times \cG$,
as the linear span of the differentials of the invariant $C^\infty$ functions.
Now, we explain that this also holds after restriction on $\fC_*$.

\begin{lem}\label{lem:spanpol}
The dimension of the span of the differentials of the restrictions of the elements of $\bR[\cG \times \cG]^G$ on $\fC_*$ \eqref{fC*}
is equal to $\dim(G)-r$ at every point of $\fC_*$.
\end{lem}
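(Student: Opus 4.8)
The plan is to prove the two inequalities ``$\le\dim(G)-r$'' and ``$\ge\dim(G)-r$'' for the span in question at an arbitrary point $x\in\fC_*$, and to reduce the harder direction to the corresponding statement for \emph{smooth} invariants by means of the result quoted from the appendix.

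For the upper bound, observe that restricting a $G$-invariant polynomial $\chi\in\bR[\cG\times\cG]^G$ to the $G$-stable submanifold $\fC_*$ yields a $G$-invariant function, which therefore factors through the principal bundle projection $p_2\colon\fC_*\to\fC_*^\red$ of Figure~\ref{Diag-DIS}: one has $\chi|_{\fC_*}=p_2^*(\bar\chi)$ for a unique $\bar\chi\in C^\infty(\fC_*^\red)$. Hence $d(\chi|_{\fC_*})(x)=(Dp_2(x))^*\,d\bar\chi(p_2(x))$ lies in the image of $(Dp_2(x))^*$, which is injective (as $p_2$ is a submersion) with image of dimension $\dim(\fC_*^\red)=\dim(\fC_*)-\dim(G)=(2\dim(G)-r)-\dim(G)=\dim(G)-r$; here I use that $\fC_\reg$ has codimension $r$ in $\cG_\reg\times\cG_\reg$ by the properties of $\Psi$ recorded in Section~\ref{S:Sec2}, and that $\fC_*$ is open in $\fC_\reg$. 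So the span in the lemma is at most $(\dim(G)-r)$-dimensional; note that only $G$-invariance, not polynomiality, was used.

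For the lower bound I would first reduce to smooth invariants. The result invoked from the appendix gives, at the point $x\in\cG\times\cG$, the equality of the linear span of $\{d\chi(x):\chi\in\bR[\cG\times\cG]^G\}$ and the linear span of $\{df(x):f\in C^\infty(\cG\times\cG)^G\}$ inside $T_x^*(\cG\times\cG)$; applying the surjective restriction map $(D\iota(x))^*\colon T_x^*(\cG\times\cG)\to T_x^*\fC_*$ of the inclusion $\iota\colon\fC_*\hookrightarrow\cG\times\cG$ preserves this equality, so it suffices to show that the differentials of $\{f|_{\fC_*}:f\in C^\infty(\cG\times\cG)^G\}$ span a $(\dim(G)-r)$-dimensional subspace of $T_x^*\fC_*$. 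By the computation of the upper bound this span is $\le\dim(G)-r$, so what remains is to realise, as differentials of restrictions of ambient smooth invariants, a set of covectors spanning the $(\dim(G)-r)$-dimensional space $(Dp_2(x))^*\bigl(T_{p_2(x)}^*\fC_*^\red\bigr)$, which is itself spanned by the differentials of $C^\infty(\fC_*)^G\cong C^\infty(\fC_*^\red)$ because $\fC_*^\red$ is a smooth manifold. For this I would use that $\fC_*$ is a $G$-stable \emph{locally closed} submanifold of $\cG\times\cG$: it is open in $\fC_\reg$, and $\fC_\reg$ is closed in the open set $\cG_\reg\times\cG_\reg$, being the common zero set there of the Casimirs $\chi_2=\chi(X)-\chi(Y)$. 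Choosing an open $W\subset\cG_\reg\times\cG_\reg$ with $\fC_\reg\cap W=\fC_*$ and replacing it by the $G$-invariant open set $\mathcal O:=\bigcup_{g\in G}g\cdot W$ (for which one still has $\fC_\reg\cap\mathcal O=\fC_*$, so $\fC_*$ is closed in $\mathcal O$), every $h\in C^\infty(\fC_*)^G$ extends to a smooth function on $\mathcal O$; averaging over the compact group $G$ makes it $G$-invariant without altering its restriction to $\fC_*$, and multiplying by a $G$-invariant cut-off supported in $\mathcal O$ and equal to $1$ near $x$ extends it by zero to an element $\hat h\in C^\infty(\cG\times\cG)^G$ with $\hat h|_{\fC_*}=h$ near $x$; then $d(\hat h|_{\fC_*})(x)=dh(x)$. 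Letting $h$ range over $C^\infty(\fC_*)^G$ gives the lower bound, and combining the two bounds — together with the appendix-based equality — yields the value $\dim(G)-r$ for both the smooth and the polynomial spans.

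The one step I expect to require genuine care is this last extension-and-averaging argument, precisely because $\fC_*$ is only \emph{locally} closed in $\cG\times\cG$ (the regularity condition defining $\cG_\reg$ and the principal-orbit-type condition defining $\fC_*$ inside $\fC_\reg$ are both merely open conditions), so one cannot directly invoke the clean statement that $G$-invariant smooth functions restrict surjectively from an ambient manifold onto a closed $G$-stable submanifold; interposing the $G$-invariant open set $\mathcal O$ and patching with invariant cut-offs is what repairs this. Everything else is bookkeeping with submersions, the identifications $C^\infty(\fC_*)^G\cong C^\infty(\fC_*^\red)$ and $C^\infty(\cM_{**}^\red)\cong C^\infty(\cM_{**})^G$, and the dimension count $\dim\fC_*=2\dim(G)-r$ from the properties of $\Psi$ in Section~\ref{S:Sec2}.
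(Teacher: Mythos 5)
Your proof is correct, but your route to the lower bound differs from the paper's. Both arguments lean on the appendix in the same way (Lemma \ref{lem:app2} to replace polynomial invariants by smooth invariants, and Lemma \ref{lem:app1} to know that the ambient span at a point of $\fC_*$ is $\dim(G)$-dimensional), and your upper bound via factoring through $p_2$ is the expected one. Where you diverge is in producing $\dim(G)-r$ independent restricted differentials: the paper does this by an explicit adapted coordinate construction --- it chooses $\dim(G)$ invariant polynomials with independent differentials at the point, arranges that the first $r$ of them are $C_i(X)-C_i(Y)$ (which cut out $\fC_\reg$ locally and restrict to zero), completes to local coordinates on $\cG\times\cG$, and reads off that the remaining $\dim(G)-r$ polynomials restrict to part of a coordinate system on $\fC_*$; this single construction delivers both inequalities at once and identifies the $r$ ``lost'' directions concretely as the conormal directions to $\fC_\reg$. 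You instead pass to the quotient manifold $\fC_*^\red$ and prove that every $G$-invariant smooth function on $\fC_*$ is, near the given orbit, the restriction of an ambient $G$-invariant smooth function, via the locally-closed-submanifold extension plus Haar averaging plus an invariant cut-off. Your extension-and-averaging step is sound (and your care about $\fC_*$ being only locally closed in $\cG\times\cG$ is exactly the right point to worry about), but it is heavier machinery than the paper needs; on the other hand, your argument makes explicit the identification of the restricted ring with $C^\infty(\fC_*^\red)$, which the paper only records afterwards in the remark following the lemma. Either proof is acceptable.
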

\begin{proof}
Let us pick an arbitrary point $z\in \fC_*$. Then, we can choose $P_i \in \bR[\cG \times \cG]^G$, $i=1,\dots, \dim(G)$ for which
the differentials $dP_i(z)$ are linearly independent. For the justification of this statement, see the appendix.
Next, we can find further $\dim(G)$ functions $\xi_i \in C^\infty(\cG \times \cG)$ such that together with the $P_i$ they
give local coordinates on an open subset $U$ of $\cG\times \cG$ around $z$. We may require   that all elements of $U$ have $Z(G)$ as their isotropy group.
Moreover, the first $r$ of the $P_i$ can be taken to be the polynomials
\be
P_i(X,Y):= C_i(X) - C_i(Y) \quad\hbox{for}\quad i=1,\dots, r,
\ee
where the $C_i$ ($i=1,\dots, r)$ generate $\bR[\cG]^G$.
It follows that $U\cap \fC_*$ is the zero level set of the coordinate functions $P_1,\dots, P_r$ on $U$,
and thus the restricted functions
\be
{P_\alpha}\vert_{U\cap \fC_*},\quad {\xi_i}\vert_{U\cap \fC_*},\quad \alpha=r+1,\dots,\dim(G),\quad i=1,\dots, \dim(G),
\ee
form a local coordinate system on $U\cap \fC_*$.
This implies the claim immediately.
\end{proof}

\begin{rem}
Using pull-back by means of the natural embedding $\iota: \fC_* \to \cG \times \cG$, we have
\be
\iota^*\!\left( \bR[\cG\times \cG]^G\right)  \subset \iota^*\!\left(C^\infty(\cG \times \cG)^G\right) \subset C^\infty(\fC_*)^G .
\ee
One sees from Lemma \ref{lem:spanpol} and the lemmas in the appendix that the differentials of the elements of
any  of these  three rings of functions span
the same subspace of the cotangent space at every point of $\fC_*$.
\end{rem}

Lemma \ref{lem:spanpol} shows that the subalgebra of $\fF_\red$ \eqref{fFred} arising from the invariant polynomials
also guarantees the degenerate integrability of the reduced system on $\cM_{**}^\red$.
A crucial advantage of the polynomial invariants is that they give rise
to smooth functions on the full quotient space $\cM^\red$.
In this way, we obtain our second main result.

\begin{thm}\label{thm:411}
Consider the Poisson subalgebras of $C^\infty(\cM)^G$ given by
the pull-backs of the polynomial invariants
\be
 \pi_2^*(\bR[\cG]^G) \quad\hbox{and}\quad \Psi^*( \bR[\cG\times \cG]^G).
\ee
Their restrictions on the dense
open subset $\cM_* \subset \cM$ of principal orbit type descend to
Poisson subalgebras of $C^\infty(\cM_*^\red)$, denoted $\fH^\pol_\red$ and $\fF^\pol_\red$, respectively.
Then,  the quadruple $\left( \cM_*^\red, \br{-,-}_*^\red, \fH_\red^\pol, \fF_\red^\pol\right)$ represents
a degenerate integrable system in the sense of Definition \ref{defn:23}, provided that $r=\rank(\cG)>1$.
In the $r=1$ case, this quadruple provides a Liouville integrable system.
\end{thm}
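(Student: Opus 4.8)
The plan is to verify the three conditions of Definition~\ref{defn:23} for the quadruple $(\cM_*^\red,\br{-,-}_*^\red,\fH_\red^\pol,\fF_\red^\pol)$ with rank $r$, extracting the quantitative statements on the dense open submanifold $\cM_{**}^\red\subset\cM_*^\red$ from the previous subsections and then transporting the functional-dimension claims to all of $\cM_*^\red$ by a semicontinuity-and-density argument. The point is that, in contrast to the smooth constants of motion used in Theorem~\ref{thm:45}, which are only defined over $\cM_{**}^\red$, the polynomial invariants $\bR[\cG]^G$ and $\bR[\cG\times\cG]^G$ pull back via the globally defined maps $\pi_2$ and $\Psi$ to functions on all of $\cM$, hence to functions on all of $\cM_*^\red$ (recall $\dim(\cM_*^\red)=\dim(G)$).

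First I would check that $\fH_\red^\pol$ and $\fF_\red^\pol$ are genuine Poisson subalgebras of $C^\infty(\cM_*^\red)\simeq C^\infty(\cM_*)^G$ verifying condition~(2). The map $\Psi$ is a $G$-equivariant Poisson map by the Proposition of Section~3, and $\pi_2$ is a $G$-equivariant Poisson map onto $\cG$ equipped with its Lie--Poisson bracket; hence $\pi_2^*(\bR[\cG]^G)$ and $\Psi^*(\bR[\cG\times\cG]^G)$ are $G$-invariant Poisson subalgebras of $C^\infty(\cM)$ whose restrictions to the $G$-stable open set $\cM_*$ descend. As in the Corollary of Section~3, $\pi_2^*(\varphi)=\Psi^*(\tilde\varphi)$ with $\tilde\varphi(X,Y):=\varphi(Y)$, and $\tilde\varphi$ is a polynomial invariant whenever $\varphi$ is, so $\fH_\red^\pol\subset\fF_\red^\pol$; moreover $\tilde\varphi$ depends only on $Y$ and is a Casimir of the Lie--Poisson bracket there, so $\{\Psi^*P,\pi_2^*\varphi\}=\Psi^*\{P,\tilde\varphi\}=0$ already on $\cM$, whence $\{\fF_\red^\pol,\fH_\red^\pol\}=0$. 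Condition~(3) is inherited from the explicit complete flows~\eqref{unredsol1}: they preserve $\cM_*$ and commute with the $G$-action, hence descend to complete flows on $\cM_*^\red$, which are the Hamiltonian flows of the elements of $\fH_\red^\pol$.

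The substantive point is condition~(1). Since every smooth invariant on $\cG$ is a smooth function of the $r$ basic polynomial invariants, every element of $\fH_\red$ is a smooth function of the $r$ descended basic polynomial Hamiltonians lying in $\fH_\red^\pol$; hence $\fH_\red^\pol$ and $\fH_\red$ have the same span of differentials, and span the same subspace of Hamiltonian vector fields, at each point of $\cM_*^\red$. By Lemma~\ref{lem:43} this subspace is $r$-dimensional at every point of $\cM_{**}^\red$, so $\fH_\red^\pol$ has functional dimension $r$ and its Hamiltonian vector fields span an $r$-dimensional subspace over a dense open subset. For $\fF_\red^\pol$, on $\cM_{**}$ one has $\Psi=\iota\circ\psi$ with $\iota\colon\fC_*\hookrightarrow\cG\times\cG$, and $\psi,p_1,\psi_\red$ are submersions, so Lemma~\ref{lem:spanpol} gives that the differentials of $\fF_\red^\pol$ span a subspace of dimension exactly $\dim(G)-r$ at every point of $\cM_{**}^\red$. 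Now $\bR[\cG\times\cG]^G$ is finitely generated, so $\fF_\red^\pol$ is generated by finitely many functions and $y\mapsto\dim\,\mathrm{span}\{df(y):f\in\fF_\red^\pol\}$ is lower semicontinuous on $\cM_*^\red$; as $\cM_{**}^\red$ is dense and this dimension is constantly $\dim(G)-r$ there, it cannot exceed $\dim(G)-r$ anywhere on $\cM_*^\red$ (an open locus where it were larger would meet $\cM_{**}^\red$), so $\fF_\red^\pol$ has functional dimension exactly $\dim(G)-r=\dim(\cM_*^\red)-r$. Finally, by Lemma~\ref{lem:denseleaves} the Poisson tensor of $\cM_*^\red$ has maximal rank $2m=\dim(G)-r$ on the dense open subset $\Phi_*^{-1}(\cG_\reg)/G$.

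It then remains to compare $r$ with $m=\frac{1}{2}(\dim(G)-r)$ exactly as in the proof of Theorem~\ref{thm:45}: for $r>1$ one has $r<m$, so Definition~\ref{defn:23} applies with rank $r$; for $r=1$ (i.e.\ $G=\mathrm{SU}(2)$) one has $r=m$, and the analysis of Subsection~4.3 shows the quadruple is Liouville integrable. I expect the only genuinely delicate step to be the transfer of the functional dimension of $\fF_\red^\pol$ from $\cM_{**}^\red$ to all of $\cM_*^\red$; this rests on Lemma~\ref{lem:spanpol} and hence on the appendix (which is precisely what guarantees that the polynomial invariants of $\cG\times\cG$, restricted to $\fC_*$, do not drop in functional dimension), on the extendability of the polynomial invariants from $\cM_{**}^\red$ to $\cM_*^\red$, and on lower semicontinuity of the rank of the differentials of a finitely generated ring of functions. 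Everything else is routine assembly of facts already in hand.
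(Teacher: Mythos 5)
Your proposal is correct and follows essentially the same route as the paper: everything is reduced, via Lemma \ref{lem:spanpol}, to the dimension counts already established on the dense open subset $\cM_{**}^\red\subset\cM_*^\red$, which suffices because Definition \ref{defn:23} only demands the functional-dimension and spanning conditions generically. Your added semicontinuity argument for the rank of the differentials of $\fF_\red^\pol$ is a correct (if strictly unnecessary, given the generic nature of the definition) elaboration of what the paper leaves implicit in the phrase ``this proves the claim immediately.''
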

\begin{proof}
It follows from Lemma \ref{lem:spanpol} that
the restrictions of $\fH^\pol_\red$ and $\fF^\pol_\red$ on
the \emph{dense open} subset $\cM_{**}^\red \subset \cM_*^\red$ have the same properties as the
corresponding rings of functions $\fH_\red$ and $\fF_\red$ that feature
in the
restricted reduced system described in  Theorem \ref{thm:45}. This proves  the claim immediately.
\end{proof}

\begin{rem}
The construction of $( \cM_*^\red, \br{-,-}_*^\red, \fH_\red^\pol, \fF_\red^\pol )$ entails
that this is a real-analytic integrable system.
After restriction on $\cM_{**}^\red\subset \cM_*^\red$, we see that the analogue of Proposition \ref{prop:leaf} remains valid
using $\fH_\red^\pol$ and $\fF_\red^\pol$.
An important consequence is that degenerate integrability
holds on every
such maximal  symplectic leaf of $\cM_*^\red$ that has nonempty intersection with $\cM_{**}^\red$.
(Here, it is assumed that $r= \rank(G)>1$.)
The statement follows by using that on a connected  real-analytic manifold the independence of functions
(and vector fields) at a single point implies their independence on a dense open subset.
See also the conjecture formulated at the end of Remark \ref{rem:rem48}.
\end{rem}

\section{Discussion}

The modest results that we presented took us a long time to develop, but our method is now
applicable  to treat the reductions of other integrable `master systems'
as well.  For example, it can be applied to the generalizations of the systems of
free motion \cite{F1} that live on the Heisenberg double and on the quasi-Poisson double of the compact Lie group $G$
(and also to the corresponding holomorphic systems).
Without going into any details, we recall that as a manifold the Heisenberg double can be identified
with
\be
\cM_2 := G \times \fP,
\label{cM2}\ee
where $\fP:= \exp(\ri \cG) \subset G^\bC$ is a closed submanifold of the complexification
$G^\bC$ of the compact Lie group $G$, and the internally fused quasi-Poisson double is equal to
\be
\cM_3 = G \times G
\ee
as a manifold. Then,  abelian Poisson algebras of `free Hamiltonians' are provided, respectively, by
the  rings of invariants
\be
\fH_2 = \pi_2^*(C^\infty(\fP)^G)
\quad \hbox{and}\quad
\fH_3 = \pi_2^*(C^\infty(G)^G),
\ee
where $\pi_2$ is the projection onto the second factor of the relevant Cartesian product.
The corresponding algebras of constants of motion can be described in a  similar manner as in the cotangent bundle case.
Namely \cite{F1}, they are encoded by the maps
\be
\Psi_2: \cM_2 \to \fP \times \fP
\quad\hbox{and}\quad
\Psi_3: \cM_3 \to G \times G,
\ee
which operate according to the  formulas
\be
\Psi_2: (g,L) \mapsto (g^{-1} L g, L)
\quad \hbox{and}\quad
\Psi_3: (A,B) \mapsto (A B A^{-1}, B).
\ee
These are Poisson and, respectively, quasi-Poisson maps with respect to suitable (degenerate) Poisson and quasi-Poisson structures
on the target spaces $\fP \times \fP$ and $G\times G$.   It should be clear from the very close similarity
to the map $\Psi: \cM \to \cG \times \cG$ \eqref{Psi} how the presented method works in these two cases.

Our method can be used to strengthen the results of \cite{FF}, where the useful restriction to
 a subset like $\fC_*$ \eqref{fC*} of the space of constants of motion already appeared. The method is also applicable to
  the reductions of the other master systems on $\cM$ \eqref{cM} and $\cM_2$ \eqref{cM2} that are built on the abelian Poisson algebras
defined by $\pi_1^* (C^\infty(G)^G)$.

 Our original motivation for the current work was that we wanted to better understand the results of Reshetikhin \cite{Re1,Re2,Re2+} about
  degenerate integrability on generic symplectic leaves of $T^*G/G$.
In the paper \cite{Re1}  it was explained that the reduced Hamiltonian descending from the kinetic energy can be interpreted
as describing a spin extension of the trigonometric Calogero--Sutherland model, if $\rank(G)>1$,
and the degenerate integrability of the `dual system' (a spin Ruijsenaars model)  obtained  from the reduction of
the master system based on $\pi_1^*(C^\infty(G)^G)$
was also demonstrated.
This work inspired our considerations. Its results are consistent with the claims of our Theorems \ref{thm:45} and \ref{thm:411}, which focus on degenerate integrability
in the sense of Definition \ref{defn:23}.
In the future, we wish to further explore the structure  of the interesting integrable systems
that arise from the pertinent reductions.

At this stage, the most challenging open problem is to understand the integrability properties on
arbitrary symplectic leaves of the smooth part $\cM_*^\red$ and on the
`singular part' $\cM^\red \setminus \cM_*^\red$ of the
full stratified Poisson space $\cM^\red$.
To improve our intuition, it  should be helpful to first analyze the
relatively low dimensional case of $G= {\mathrm{SU}}(3)$ in detail.

\bigskip
\medskip
\subsection*{Acknowledgement}
I wish to thank Alexey Bolsinov for  discussions, and especially
for a suggestion that led to the proof of Theorem \ref{thm:411}.
I also wish to thank  Anton Izosimov for kind help with the proof of Lemma \ref{lem:app1}.
This work was supported in part by the NKFIH research grant K134946.
\medskip

\appendix

\section{Useful technical results on invariant functions}

Let $G$ be a compact  Lie group acting on a $C^\infty$ connected manifold $M$,
and denote by $M_\princ \subset M$ the open, dense subset of principal orbit type for the $G$-action.
Informally speaking, this means that the points of $M_\princ$ have the smallest possible
isotropy group for the $G$-action. It is well-known that $M_\princ/G$ is a connected, smooth manifold, such that
the natural projection
\be
p: M_\princ \to M_\princ/G
\label{B1}\ee
is a smooth submersion.

\begin{lem}\label{lem:app1}
At any $x\in M_\princ$, the span of the differentials of the elements of $C^\infty(M)^G$ is a subspace of $T_x^*M$ whose
dimension is equal to the codimension in $M$  of the $G$-orbit through $x$.
\end{lem}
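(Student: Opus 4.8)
The plan is to reduce the statement to the corresponding fact about the orbit map $p$ in \eqref{B1} together with the density of the principal-orbit-type stratum. First I would fix $x\in M_\princ$ and let $\cO_x$ denote the $G$-orbit through $x$. The inclusion ``$\leq$'' is immediate: every $f\in C^\infty(M)^G$ is constant on $G$-orbits, so $df(x)$ annihilates $T_x\cO_x$; hence the span of $\{df(x): f\in C^\infty(M)^G\}$ lies in the annihilator $(T_x\cO_x)^\circ\subset T_x^*M$, which has dimension $\dim M-\dim\cO_x=\operatorname{codim}\cO_x$.

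For the reverse inequality I would use the submersion $p: M_\princ\to M_\princ/G$. Since $p$ is a surjective submersion with fibres the principal orbits, the pullback $p^*: C^\infty(M_\princ/G)\to C^\infty(M_\princ)^G$ is an isomorphism, and $(Dp(x))^*: T_{p(x)}^*(M_\princ/G)\to T_x^*M_\princ$ is injective with image exactly $(T_x\cO_x)^\circ$ (this is the standard description of the conormal space of a submersion's fibre). Thus it suffices to show that the differentials at $p(x)$ of the functions in $C^\infty(M_\princ/G)$ span the whole cotangent space $T_{p(x)}^*(M_\princ/G)$ --- which is automatic on a smooth manifold, since coordinate functions near $p(x)$ already have independent differentials there. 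Pulling back, the differentials $\{d(p^*h)(x): h\in C^\infty(M_\princ/G)\}$ span all of $(T_x\cO_x)^\circ$.

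The one genuine point requiring care is that the functions in the previous paragraph are a priori only smooth on $M_\princ$, whereas the lemma asserts the statement for $C^\infty(M)^G$, i.e.\ functions smooth on all of $M$. Here I would invoke the theorem (due to Schwarz, and valid for compact $G$) that restriction $C^\infty(M)^G\to C^\infty(M_\princ)^G$ has dense image in a suitable sense; more concretely, and sufficiently for our purpose, every $G$-invariant function defined and smooth near a principal-orbit point, multiplied by a $G$-invariant bump function supported near that orbit, extends to an element of $C^\infty(M)^G$ with the same $1$-jet along the orbit. Applying this to a finite collection of local invariant coordinates on $M_\princ/G$ pulled back via $p$ produces genuinely global invariant functions whose differentials at $x$ still span $(T_x\cO_x)^\circ$. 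Combining this with the inclusion ``$\leq$'' from the first paragraph yields equality.

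The main obstacle is precisely this extension/approximation step: one must be sure that passing from locally-defined invariants on the open stratum to globally smooth invariants on $M$ does not shrink the span of differentials at the chosen point. Once the invariant bump-function argument (or a direct appeal to Schwarz's theorem on invariants of compact group actions) is in place, the rest is the routine linear algebra of submersions recorded above. I would also note that this lemma is exactly the tool needed to upgrade Lemma \ref{lem:spanpol}, where the same phenomenon --- differentials of global (even polynomial) invariants spanning the full conormal space to orbits --- is used on $\fC_*$.
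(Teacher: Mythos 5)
Your proposal is correct and follows essentially the same route as the paper: local coordinates on $M_\princ/G$ near $p(x)$, cut off by a bump function and pulled back through the principal bundle projection, then extended by zero to globally smooth elements of $C^\infty(M)^G$ whose differentials span the conormal space $(T_x\cO_x)^\circ$. The only cosmetic difference is that you phrase the count via the annihilator of $T_x\cO_x$ and mention Schwarz's theorem as an alternative, whereas the paper carries out the bump-function construction explicitly in a local trivialization $B_r(y)\times G/G_x$; no appeal to Schwarz is needed here.
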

\begin{proof}
Let us recall \cite{DK,GOV,Mich}  that $M_\princ$ can be viewed as a locally trivial fiber bundle with fiber $G/G_x$ and base $M_\princ/G$.
Fix a coordinate system around $y= p(x)\in M_\princ/G$ and in terms of those coordinates consider an open ball $B_r(y)$ over which
the bundle is trivial, i.e., we have a $G$-equivariant diffeomorphism
\be
\sigma: B_r(y) \times G/G_x \to p^{-1}(B_r(y)),
\ee
where $G$ acts trivially on the  factor $B_r(y)$. Then we make a construction in four steps.

First, with $k:=\dim(M_\princ/G)$, for each $i=1,\dots, k$,
let us pick a function $\chi_i \in C^\infty(B_r(y))$ that coincides with the $i^{\mathrm{th}}$ coordinate on $B_{r/2}(y)$, and it vanishes outside
$B_{2 r/3}(y)$. It is standard that such functions exist.

Second, let
\be
f_i \in  C^\infty(B_r(y) \times G/G_x)
\ee
be the functions $\pi_1^*(\chi_i)$ where $\pi_1: B_r(y) \times G/G_x  \to B_r(y)$ is the obvious projection.

Third,
 we introduce
\be
F_i := f_i  \circ \sigma^{-1} \in C^\infty( p^{-1}(B_r(y)))^G.
\ee

Fourth, we let $\cF_i$ be the function on the full manifold $M$ that agrees with the $F_i$ on $p^{-1}(B_r(y))$ and vanishes identically
on the complement of this set.
The definition guarantees  that $\cF_i \in C^\infty(M)^G$.

It is obvious from the construction that the differentials
\be
d \cF_1(z),\dots,  d\cF_k(z)
\ee
are independent at every point $z\in p^{-1}(B_{r/2}(y))$, and the differential of every $G$-invariant function is in their span.
In particular, this holds  at the point $x\in M_\princ$ that we started with, whereby the proof is complete.
\end{proof}

\begin{lem}\label{lem:app2}
Let a compact Lie group $G$ act on  a finite dimensional real vector space $V$, and consider
 the ring of invariant
polynomials  $\bR[V]^G \subset C^\infty(V)^G$.
Then, at any $v\in V$, the span of the differentials of the elements of $\bR[V]^G$  is the same
as the span of the differentials of the elements of $C^\infty(V)^G$.
\end{lem}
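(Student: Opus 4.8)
The plan is to reduce Lemma~\ref{lem:app2} to Lemma~\ref{lem:app1} by exploiting the linear structure of $V$ together with averaging. The key point is that on a vector space the invariant polynomials are dense, in a strong ``jet at a point'' sense, among all invariant smooth functions: given any $f\in C^\infty(V)^G$ and any $v\in V$, I want to produce an invariant polynomial $P$ with $dP(v)=df(v)$. Since the span of the differentials of $\bR[V]^G$ is automatically contained in the span of the differentials of $C^\infty(V)^G$, establishing this surjectivity of first-order data at each point gives the reverse inclusion and hence equality.

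First I would invoke Lemma~\ref{lem:app1} with $M=V$: since $V_\princ$ is dense and open and the conclusion is a closed condition (equality of two subspaces of a fixed dimension that varies upper-semicontinuously), it suffices to prove the claim at points of $V_\princ$, where the span of the differentials of $C^\infty(V)^G$ has dimension equal to the codimension of the orbit $G\cdot v$. Actually it is cleaner to argue directly at an arbitrary $v$: by averaging, the tangent space $T_v(G\cdot v)$ is exactly the annihilator (inside $T_vV\simeq V$, via any $G$-invariant inner product) of the common kernel of the differentials $df(v)$, $f\in C^\infty(V)^G$, because invariant functions are constant along orbits and conversely a $G$-invariant covector at $v$ pulls back to the differential of some invariant function near $v$ (push forward a bump-times-linear function and average). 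So the target subspace is $(T_v(G\cdot v))^\perp$, and the same description will apply to $\bR[V]^G$ once I know invariant polynomials separate the orbit directions at $v$.

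Second I would produce enough invariant polynomials. Fix a $G$-invariant inner product on $V$ (exists by compactness of $G$) and consider the orbit $\cO:=G\cdot v$, a compact submanifold. Pick any covector $\xi\in (T_v\cO)^\perp$; I claim $\xi=dP(v)$ for some $P\in\bR[V]^G$. Start with the linear functional $\ell(w):=\ip{\xi^\sharp,w}$ on $V$, where $\xi^\sharp$ is the vector dual to $\xi$. Its $G$-average $P_1(w):=\int_G \ell(\eta\cdot w)\,d\eta$ is an invariant polynomial (of degree $1$), and $dP_1(v)(u)=\int_G \ip{\xi^\sharp,\eta\cdot u}\,d\eta$. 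This need not equal $\xi(u)$, but the difference $dP_1(v)-\xi$, restricted to $T_v\cO$, vanishes (both annihilate $T_v\cO$: $\xi$ by choice, $dP_1(v)$ because $P_1$ is invariant hence constant on $\cO$), so $dP_1(v)-\xi$ already lies in $(T_v\cO)^\perp$. Thus it suffices to show the restriction-to-$(T_v\cO)^\perp$ of the map $P\mapsto dP(v)$, with $P$ ranging over $\bR[V]^G$, is onto $(T_v\cO)^\perp$. Equivalently, using the inner product identification, the set of gradients $\{\nabla P(v):P\in\bR[V]^G\}$ spans $(T_v\cO)^\perp$. This is a standard fact: the ideal of polynomials vanishing on the compact real algebraic set $\cO$ is generated, as far as first-order behaviour at $v$ goes, by invariant polynomials — indeed $\cO$ is cut out near $v$ by the level sets of a generating set $C_1,\dots,C_k$ of $\bR[V]^G$ (this is essentially the statement that invariant polynomials separate orbits of a compact group, a classical theorem), so the $dC_i(v)$ span exactly $N^*_v\cO=(T_v\cO)^\perp$.

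The main obstacle is the last-cited classical input: that a compact group's polynomial invariants separate orbits, and consequently that the conormal space to an orbit at $v$ is spanned by differentials of invariant polynomials. I would handle it by citing the separation theorem (e.g. from \cite{GOV} or a standard invariant-theory reference) and then noting that if $P_1,\dots,P_N$ generate $\bR[V]^G$ then the map $P:=(P_1,\dots,P_N):V\to\bR^N$ has the orbits as its fibres; hence $\ker dP(v)=T_v\cO$ and the transpose $dP(v)^*$ has image $(T_v\cO)^\perp$, which is precisely the span of the $dP_i(v)$. Combined with the description of the $C^\infty(V)^G$-span as the same $(T_v\cO)^\perp$ from the averaging argument, the two spans coincide, completing the proof.
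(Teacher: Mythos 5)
There is a genuine gap, and it sits at the heart of your argument: you identify both spans with the full annihilator $(T_v\cO)^{\perp}$ of the orbit direction, and this identification is false at non-principal points, while the lemma is asserted \emph{at every} $v\in V$. Take $G=\mathrm{SO}(2)$ acting on $V=\bR^2$ by rotations and $v=0$: the orbit is a point, so $(T_v\cO)^{\perp}=V^*$ is two-dimensional, yet every $G$-invariant function (smooth or polynomial) is a function of $x^2+y^2$ and has vanishing differential at the origin, so both spans are $\{0\}$. The same example kills the step ``$P=(P_1,\dots,P_N)$ has the orbits as fibres, hence $\ker dP(v)=T_v\cO$'': set-theoretic fibres do not control the differential, and for $P=x^2+y^2$ one has $\ker dP(0)=\bR^2\neq T_0\cO=\{0\}$. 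Your supporting constructions degenerate in the same way: the $G$-average of a linear functional is zero whenever $V$ has no nonzero $G$-fixed vectors, so $P_1$ is constant and contributes nothing; and on the smooth side, the averaged bump-times-linear function only realizes the \emph{$G_v$-invariant} covectors in $(T_v\cO)^{\perp}$, not all of them. The correct common value of the two spans at a general $v$ is contained in $(T_v\cO)^{\perp}\cap (T_v^*V)^{G_v}$, and proving that the polynomial differentials fill this space is exactly the nontrivial content of the lemma, which your argument does not reach. Your fallback of reducing to $V_\princ$ via Lemma \ref{lem:app1} also fails: the dimension of a span of differentials is lower semicontinuous, so equality of the two spans on a dense open set gives no lower bound on the polynomial span at a limit point.

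The paper's proof is a two-line application of Schwarz's theorem \cite{Sch}: every $F\in C^\infty(V)^G$ factors as $F=\cF\circ(P_1,\dots,P_N)$ with $\cF$ smooth and $P_1,\dots,P_N$ generating $\bR[V]^G$, whence $dF(v)=\sum_i(\partial_i\cF)(P(v))\,dP_i(v)$ lies in the span of the $dP_i(v)$; the reverse inclusion is trivial. The lemma is essentially the infinitesimal shadow of Schwarz's theorem, and some input of that strength appears unavoidable; your proposal never invokes it and does not supply a substitute.
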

\begin{proof}
Let $P_1,\dots, P_N$ denote a generating set of $\bR[V]^G$.
According to Schwarz's theorem \cite{Sch}, every smooth function $F\in C^\infty(V)^G$ can be represented in the form
\be
F = \cF \circ (P_1,\dots, P_N)
\ee
with a suitable function $\cF \in C^\infty(\bR^N)$.
Since
\be
dF(v) = \sum_{i=1}^N (\partial_i \cF)(P_1(v),\dots, P_N(v))  dP_i(v), \qquad \forall v\in V,
\ee
the claim follows.
\end{proof}


\begin{thebibliography}{99}

 \setlength{\parskip}{0.30em}


 \bibitem{AKSM}
A.~Alekseev, Y.~Kosmann-Schwarzbach and E.~Meinrenken,
{\it Quasi-Poisson manifolds}.
Canad. J. Math. {\bf 54} (2002) 3-29;
\href{https://arxiv.org/abs/math/0006168}{\tt arXiv:math/0006168}

\bibitem{AMM}
A.~Alekseev, A.~Malkin and E.~Meinrenken,
{\it Lie group valued moment maps}.
J. Differential Geom. {\bf 48} (1998) 445-495;
 \href{https://arxiv.org/abs/dg-ga/9707021}{\tt arXiv:dg-ga/9707021}



\bibitem{DK}
J.J.~Duistermaat and J.A.C.~Kolk,
Lie Groups. Springer, 2000


\bibitem{FF}
 M.~Fairon and L.~Feh\'er,
{\it Integrable multi-Hamiltonian systems from reduction of an extended quasi-Poisson double of $\mathrm{U}(n)$},
Ann. Henri Poincar\'e {\bf 24} (2023)  3461-3529;
 \href{https://arxiv.org/abs/2302.14392}{\tt arXiv:2302.14392}


\bibitem{Fas}
F.~Fasso,
{\it Superintegrable Hamiltonian systems: geometry and perturbations}.
Acta Appl. Math. {\bf 87} (2005) 93-121


\bibitem{F1}
L.~Feh\'er,
{\it Poisson reductions of master integrable systems on doubles of compact Lie groups}.
 Ann. Henri Poincar\'e {\bf 24} (2023) 1823-1876;
 \href{https://arxiv.org/abs/2208.03728}{\tt arXiv:2208.03728}


\bibitem{GOV}
V.V.~Gorbatsevich, A.L.~Onischik and E.B.~Vinberg,
Foundations of Lie Theory and Lie Transformations Groups.
Springer, 1997

\bibitem{J}
B.~Jovanovic,
{\it Symmetries and integrability}.
Publ. Institut Math. {\bf 49} (2008) 1-36;
 \href{https://arxiv.org/abs/0812.4398}{\tt arXiv:0812.4398}

\bibitem{Knop}
F.~Knop,
{\it Convexity of Hamiltonian manifolds}.
J. Lie Theory {\bf 12} (2002), 571-582;
\href{https://arxiv.org/abs/math/0112144}{\tt arXiv:math/0112144}

\bibitem{Ko}
B.~Kostant,
{\it The principal three-dimensional subgroup and the Betti numbers of a complex simple Lie group}.
Amer. J. Math. {\bf 81} (1959) 973–1032

\bibitem{LMV}
C.~Laurent-Gengoux, E.~Miranda and P.~Vanhaecke,
{\it Action-angle coordinates for integrable systems on Poisson manifolds}.
Int. Math. Res. Not. {\bf  2011} 1839-1869;  	
\href{https://arxiv.org/abs/0805.1679}{\tt arXiv:0805.1679}

\bibitem{Mein}
E.~Meinrenken,
{\it Verlinde formulas for nonsimply conected groups}.
 In: Lie Groups, Geometry, and Representation Theory, Progress in Mathematics (Birkh\"auser) {\bf 326} (2018) 381-417;
\href{https://arxiv.org/abs/1706.04045}{\tt arXiv:1706.04045}

 \bibitem{Mich}
P.W.~Michor, Topics in Differential Geometry. American Mathematical Society, 2008

 \bibitem{MF}
A.S.~Mischenko and A.T.~Fomenko,
{\it  Generalized Liouville method for integrating Hamiltonian systems}.
Funct. Anal. Appl. {\bf 12} (1978) 113-125

 \bibitem{MPV}
 W.~Miller Jr, S.~Post and P.~Winternitz,
 {\it  Classical and quantum superintegrability with applications}.
 J. Phys. A {\bf 46} (2013) Paper No. 423001;
 \href{https://arxiv.org/abs/1309.2694}{\tt  arXiv:1309.2694}

  \bibitem{Nekh}
N.N.~Nekhoroshev,
{\it  Action-angle variables and their generalizations}.
Trans. Moscow Math. Soc. {\bf 26} (1972) 180-197


\bibitem{OR}
J.-P.~Ortega and T.~Ratiu,
Momentum Maps and Hamiltonian Reduction. Birkh\"auser, 2004


\bibitem{Re1}
N.~Reshetikhin,
{\it Degenerate integrability of spin Calogero--Moser systems and the
duality with the spin Ruijsenaars systems}.
Lett. Math. Phys. {\bf 63} (2003) 55-71;
\href{https://arxiv.org/abs/math/0202245}{\tt arXiv:math/0202245}



\bibitem{Re2}
 N.~Reshetikhin,
{\it Degenerately integrable systems}.
J. Math. Sci. {\bf 213} (2016) 769-785;
\href{https://arxiv.org/abs/1509.00730}{\tt arXiv:1509.00730}

\bibitem{Re2+}
N.~Reshetikhin,
{\it Degenerate integrability of quantum spin Calogero--Moser systems}.
Lett. Math. Phys. {\bf 107} (2017) 187–200;
\href{https://arxiv.org/abs/1510.00492}{\tt arXiv:1510.00492}

\bibitem{Re3}
N.~Reshetikhin,
{\it Spin Calogero--Moser models on symmetric spaces}.
Integrability, Quantization, and Geometry. I. Integrable Systems, Proc. Sympos. Pure Math., Vol. 103.1, Amer. Math. Soc.,
 pp. 377-402, 2021;
\href{https://arxiv.org/abs/1903.03685}{\tt arXiv:1903.03685}

\bibitem{Re4}
N.~Reshetikhin,
{\it Periodic and open classical spin Calogero--Moser chains}.
Surveys in Differential Geometry {\bf 26} (2021) 263--297;
\href{https://arxiv.org/abs/2302.14281}{\tt arXiv:2302.14281}

\bibitem{Rud}
G.~Rudolph and M.~Schmidt,
Differential Geometry and Mathematical Physics. Part I. Manifolds, Lie Groups and
Hamiltonian Systems. Springer, 2013

\bibitem{Sch}
G.W.~Schwarz,
{\it Smooth functions invariant under the action of a compact Lie group}.
Topology {\bf 14} (1975) 63-68


\bibitem{STS}
M.A.~Semenov-Tian-Shansky,
{\it Dressing transformations and Poisson group actions}.
Publ. RIMS {\bf 21} (1985)  1237-1260

\bibitem{Zu}
N.T. Zung,
{\it Torus actions and integrable systems}.
pp. 289-328 in: Topological Methods in the Theory of Integrable Systems,
Camb. Sci. Publ., 2006;
\href{https://arxiv.org/abs/math/0407455}{\tt arXiv:math/0407455}


\end{thebibliography}
\end{document}